\titleformat{\subsection}[runin]
{\normalfont\bfseries}{\thesubsection}{1em}{} \mathsurround=1pt
\renewcommand{\bar}{\overline}
\renewcommand{\tilde}{\widetilde}
\newcommand{\PP}{{\mathbb P}}
\newcommand{\EE}{{\mathbb E}}
\newcommand{\LL}{{\mathbb L}}
\newcommand{\FF}{{\mathbb F}}
\newcommand{\GG}{{\mathbb G}}
\newcommand{\NN}{{\mathbb N}}
\newcommand{\RR}{{\mathbb R}}
\renewcommand{\phi}{\varphi}
\renewcommand{\kappa}{\varkappa}
\newtheorem{theorem}{Theorem}[section]
\newtheorem{proposition}[theorem]{Proposition}
\newenvironment{proof}{{\vskip\baselineskip\noindent\textbf{Proof}}}%
 \numberwithin{equation}{section}
 \renewcommand{\section}[1]{\refstepcounter{section}
                            {\noindent\large\bf\thesection. #1}}
\renewcommand{\section}{\@startsection{section}{1}{0pt}{30pt}{6pt}{\large\bf}}
\def\dot{\hspace{-16pt}.\hspace{-2pt} }
\renewcommand{\@makefnmark}{}
\renewcommand{\@cite}[2]{[{#1\if@tempswa ; #2\fi}]}
\begin{document}

\title{Perpetual American Standard and Lookback Options in Insider Models with Progressively Enlarged Filtrations}

\author{Pavel V. Gapeev\footnote{(Corresponding author) London School of Economics,
Department of Mathematics, Houghton Street, London WC2A 2AE, United
Kingdom; e-mail: p.v.gapeev{\char'100}lse.ac.uk} \and Libo Li\footnote{University of New South Wales, School of Mathematics and Statistics, Sydney NSW 2052, Australia; e-mail: libo.li{\char'100}unsw.edu.au}}

\date{}
\maketitle

\begin{abstract}
       We derive closed-form solutions to the optimal stopping problems related to
       the pricing of perpetual American standard and lookback put and call options
       in the extensions of the Black-Merton-Scholes model with progressively enlarged filtrations.
       More specifically, the information available to the insider is modelled
       by Brownian filtrations progressively enlarged with the times of either the global
       maximum or minimum of the underlying risky asset price over the infinite time interval, which is not a stopping time in the filtration generated by the underlying risky asset.
       We show that the optimal exercise times are the first times at which the asset price process
       reaches either lower or upper stochastic boundaries depending on the current values of its
       running maximum or minimum given the occurrence of times of either the global maximum or minimum,
       respectively.
       The proof is based on the reduction of the original problems into the necessarily
       three-dimensional optimal stopping problems and the equivalent free-boundary problems.
       We apply either the normal-reflection or the normal-entrance conditions as well as the smooth-fit
       conditions for the value functions to characterise the candidate boundaries as either the
       maximal or minimal solutions to the associated first-order nonlinear ordinary differential
       equations and the transcendental arithmetic equations, respectively.
\end{abstract}


\footnotetext{\textit{Mathematics Subject Classification 2010:} Primary 91B25, 60G40, 60G44.
Secondary 60J65, 60J60, 35R35.}


\footnotetext{{\it Key words and phrases:}
     Perpetual American standard and lookback options,
     geometric Brownian motion, running maximum and minimum processes,
     progressive enlargement of filtrations, random times of the global maximum and minimum,
     optimal stopping problem, first passage time, stochastic boundary, coupled free-boundary
     problem, instantaneous stopping and smooth fit, normal reflection and normal entrance,
     a change-of-variable formula with local time on surfaces.}

\footnotetext{{\it Date:} \date{\today}}

\section{\dot Introduction}


     Let us consider a probability space $(\Omega, {\cal F}, \PP)$ with a standard Brownian motion
     $B=(B_t)_{t \ge 0}$ and define the process $X=(X_t)_{t \ge 0}$ by:
       \begin{equation}
       \label{X4}
       X_t = x \, \exp \Big( \big( r - \delta - {\sigma^2}/{2} \big) \, t + \sigma \, B_t \Big)
       \end{equation}
       which solves the stochastic differential equation:
       \begin{equation}
       \label{dX4}
       dX_t = (r-\delta) \, X_{t} \, dt + \sigma \, X_{t} \, dB_t \quad (X_0=x)
       \end{equation}
       where $x > 0$ is fixed, and $r > 0$, $\delta > 0$, and $\sigma > 0$ are some given constants.
       Assume that $X$ describes the price of a risky asset in a financial market,
       where $r$ is the riskless interest rate, $\delta$ is the dividend rate paid to the asset holders,
       and $\sigma$ is the volatility rate.
     We aim to present closed-form solutions to the discounted optimal stopping problems with the values:
     \begin{align}
     \label{VU5a1}
     V_i &= \sup_{\tau \in \GG^1} \EE \big[ e^{- r \tau} \, G_i(X_{\tau}, S_{\tau}) \big]
     \quad \text{and} \quad
     U_i = \sup_{\zeta \in \GG^2} \EE \big[ e^{- r \zeta} \, F_i(X_{\zeta}, Q_{\zeta}) \big]
     \end{align}
     where $G_1(x, s) = L_1 - x$, $G_2(x, s) = s - L_2 x$, $G_3(x, s) = s - L_3$, and
     $F_1(x, q) = x - K_1$, $F_2(x, q) = K_2 x - q$, $F_3(x, q) = K_3 - q$,
     for some deterministic constants $L_i, K_i > 0$, for $i = 1, 2, 3$.
     The linear functions $G_i(x, s)$ and $F_i(x, q)$, for $i = 1, 2, 3$, represent the
     payoffs of {\it standard} and {\it lookback} options with {\it floating} and {\it fixed}
     strikes, respectively, which are widely used in financial practice.
     Here, the processes $S=(S_t)_{t \ge 0}$ and $Q=(Q_t)_{t \ge 0}$
     are the {\it running maximum} and {\it minimum} of $X$ defined by:
       \begin{equation}
       \label{SQ4}
       S_t = s \vee \Big( \max_{0 \le u \le t} X_u \Big) \equiv \max \Big\{ s, \max_{0 \le u \le t} X_u \Big\}
       \quad \text{and} \quad
       Q_t = q \wedge \Big( \min_{0 \le u \le t} X_u \Big) \equiv \min \Big\{ q, \min_{0 \le u \le t} X_u \Big\}
       \end{equation}
       for some arbitrary $0 < x \le s$ and $0 < q \le x$, respectively.

     Suppose that the suprema in (\ref{VU5a1}) are taken over all stopping times
     $\tau$ and $\zeta$ with respect to the filtrations $\GG^k = ({\cal G}^k_t)_{t \ge 0}$,
     for $k = 1, 2$, which coincide with the (right-continuous) reference filtration
     $\FF = ({\cal F}_t)_{t \ge 0}$ progressively enlarged by the random times $\theta$ and $\eta$,
     which are given by:
       \begin{equation}
       \label{theta}
       \theta = \sup \{ t \ge 0 \, | \, X_t = S_t \}
       \quad \text{and} \quad
       \eta = \sup \{ t \ge 0 \, | \, X_t = Q_t \}
       \end{equation}
       respectively. More precisely, the filtrations $\GG^k$, for $k = 1, 2$,
       are defined as the the right-continuous versions of $\GG^{*,k}$, for $k = 1, 2$, given by:
\begin{equation}
\label{gg}
\GG^{*,1}_t = \bigcap_{u > t} \big( {\cal F}_u \vee \sigma (\theta \wedge u) \big)
\quad \text{and} \quad
\GG^{*,2}_t = \bigcap_{u > t} \big( {\cal F}_u \vee \sigma (\eta \wedge u) \big)
\end{equation}
for all $t \ge 0$.
       Note that the random times $\theta$ and $\eta$ from (\ref{theta})
       are not stopping times with respect to the natural filtration $\FF$
       of the process $X$, but they are honest times with respect to $\FF$
       in the sense of Barlow \cite{B} and Nikeghbali and Yor \cite{NY}.
       In this view, the values $V_i$ and $U_i$, for $i = 1, 2, 3$, in (\ref{VU5a1}) can be
       interpreted as the rational (or no-arbitrage) prices of the perpetual American options
       in the appropriate extension of the Black-Merton-Scholes model with enlarged information
       flows (see, e.g. \cite[Chapter~VII, Section~3g]{FM}).
       Some extensive overviews of the perpetual American options in diffusion models of financial
       markets and other related results in the area are provided in Shiryaev \cite[Chapter~VIII; Section~2a]{FM},
       Peskir and Shiryaev \cite[Chapter~VII; Section~25]{PSbook}, and Detemple \cite{Det} among others.

This work investigates the value of American options within the context of models with insider information. Our objective is to quantify the additional value to the holder (the informed agent) when they have access to the timing of the global maximum or minimum of the price of the underlying asset. We assume that the holder is permitted to exercise the option at these specific times, or more generally, stopping times in the progressive enlargements $\GG^k$, for $k = 1, 2$, of the market filtration $\FF$ with the timing of the global maximum or minimum of the asset price $X$. This expands on our previous work \cite{GL1}-\cite{GL2} (and \cite{GLW}), where the standard and lookback options could only be exercised at $\FF$-stopping times. In this paper, we also stress that we do not claim that the informed agent (or some third parties communicating with them) know the future dynamics of the risky asset prices in advance but rather they are insiders who can either influence or determine the timing at which the price processes of the underlying risky assets reach their global peaks or bottoms. We emphasise that utilising the last passage times is a natural approach to modeling insider problems. Our assumptions are consistent with the observations of Associate Professor Daniel Taylor from the Wharton School, as noted in an NBC article\footnote{https://www.nbcnews.com/business/business-news/peloton-insiders-sold-nearly-500-million-stock-big-drop-rcna12741} on insider trading: {\it ``One of the most well-accepted facts from decades of research on insider trading is that corporate insiders buy near bottoms and sell near peaks."} Furthermore, it has been established that arbitrage opportunities exist in models involving last passage times, as discussed by Fontana, Jeanblanc and Song \cite{FJS} and Zwierz \cite{Zwierz}.


       The perpetual American standard and lookback option pricing problems with the payoffs
       $G_i(x, s)$ and $F_i(x, q)$, for $i = 1, 2, 3$, defined after (\ref{VU5a1}) for the
       exercise times under the Brownian reference filtration $\FF = ({\cal F}_t)_{t \ge 0}$
       generated by the process $X$ only were considered in Shepp and Shiryaev \cite{SS1}-\cite{SS2}
       (see also \cite[Chapter~VIII, Section~2]{FM} and \cite{PSbook}),
       Pedersen \cite{Jesper}, Guo and Shepp \cite{GuoShepp}, and, more recently,
       in Rodosthenous and Zervos \cite{RodZer}, Gapeev, Kort, and Lavrutich \cite{GKL},
       Gapeev et al. \cite{GKLT22}, and Gapeev \cite{Gapmax24}, respectively.
       The same problems in the models based on the geometric jump-diffusion processes
       and L\'evy processes having spectrally negative or positive jumps were studied by
       Asmussen, Avram, and Pistorius \cite{AAP}, Avram, Kyprianou, and Pistorius \cite{AKP},
       Gapeev \cite{Gap}, Ott \cite{Ott}, Kyprianou and Ott \cite{KyprOtt} among others.
       Other versions of the problems in the models given by (\ref{X4})-(\ref{dX4}) and
       (\ref{SQ4}) under the Brownian reference filtration $\FF$ with the same payoff
       functions but with the processes $(X, S)$ and $(X, Q)$ stopped at the random times
       $\theta$ and $\eta$ from (\ref{theta}) were considered in our previous papers
       \cite{GL1}-\cite{GL2} (see also \cite{GLW}).
       Although the random times $\theta$ and $\eta$, which are the last hitting times for
       the underlying risky asset price process $X$ of its running maximum $S$ or minimum $Q$
       are not stopping times with respect to $\FF$, the problems in \cite{GL1}-\cite{GL2}
       were solved as the associated optimal stopping problems
       for the two-dimensional processes $(X, S)$ and $(X, Q)$, respectively.
       In contrast to the similar problems studied in the literature, in this paper, we study
       the perpetual American standard and lookback options in the models in which the holders
       can also progressively observe the random times at which the underlying risky assets,
       which may represent the shares of either their own firms or other companies
       to the decisions of the board of which they have access and potentially influence them,
       attain their global maxima or mimima. This corresponds to the case in which the firms issuing
       the asset announce the appropriate executive board decisions implying the properties that the
       associated firm value processes achieve their global maxima or minima over the infinite time interval,
       respectively.

Investment problems in models of financial markets involving insider information are typically studied in the initial enlargement of filtrations, by focusing primarily on the insider utility or portfolio maximisation. It is usually assumed that the informed agent knows the terminal values of the underlying risky asset, which satisfy Jacod's density hypothesis. Notable contributions in this area include Amendinger, Imkeller, and Schweizer \cite{AIS}, Grorud and Pontier \cite{GP98}, and Ankirchner, Dereich, and Imkeller \cite{ADI2006}. Optimal stopping problems involving initial enlargement of filtration were also studied by Esmaeeli and Imkeller \cite{EI}, by means of a reflected BSDE approach. Dumitrescu, Quenez, and Sulem \cite{DQS} as well as Grigorova, Quenez, and Sulem \cite{GQS} studied American options in the context of progressively enlarged filtrations using reflected BSDEs, however, both \cite{DQS} and \cite{GQS} assume that the random times exhibit an intensity process and satisfy hypothesis $(H)$. In the case of game options, we refer to Bielecki, Cr\'epey, Jeanblanc and Rutkowski \cite{BCJR}-\cite{BCJR1} where the random time is assumed to be a pseudo-stopping time, see Nikeghbali and Yor \cite{NY1}. Recently, D'Auria, Di Nunno and Salmer\'on \cite{SDD} explored portfolio optimisation problems for informed agents with access to last passage times.

In the present paper, unlike the aforementioned works, we are faced additional mathematical challenges since, in general, last passage times are not pseudo-stopping times (cf., e.g. Aksamit and Li \cite[Proposition 1]{AL}) and do not satisfy the intensity hypothesis, hypothesis $(H)$ or the density hypothesis (cf., e.g. Aksamit and Jeanblanc \cite{aj}). To tackle the problem at hand, we embed the initial problems of (\ref{VU5a1}) into the associated optimal stopping problems of (\ref{VU5c1}) below for the three-dimensional continuous Markov processes having the underlying
       risky asset price $X$ and its running maximum $S$ or minimum $Q$ together with the so-called
       default processes $\Xi^1 \equiv I(\theta \le \cdot)$ and $\Xi^2 \equiv I(\eta \le \cdot)$
       as their state space components, respectively, where $I(\cdot)$ denotes the indicator function.
       The resulting problems turn out to be necessarily three-dimensional in the sense that they cannot
       (generally) be reduced to optimal stopping problems for (time-homogeneous strong) Markov processes of lower dimensions.
       The new analytic feature observed by solving these problems is that the normal-reflection
       conditions hold for the value functions at the edges of the two-dimensional state
       spaces of the processes $(X, S)$ and $(X, Q)$, when $\Xi^k = 0$, for $k = 1, 2$,
       while the normal-entrance conditions hold instead, when $\Xi^k = 1$, for $k = 1, 2$.
       This matter is explained by the facts that the processes $S$ and $Q$ change their
       values, when $X = S$ and $X = Q$ before the occurrence of the random times $\theta$
       and $\eta$, which are progressively observable in $\GG^k$, for $k = 1, 2$, but do not change
       their values after the occurrence of $\theta$ and $\eta$, respectively.
       The latter facts also imply the properties that the optimal exercise boundaries
       for the process $X$, which depend on the current values of $S$ and $Q$, remain
       constant after the occurrence of $\theta$ and $\eta$, respectively.
In particular, in the cases of floating-strike lookback option payoffs
$G_2(x, s) \equiv s - L_2 x$ and $F_2(x, q) \equiv q - K_2 x$, the optimal
stopping boundaries form straight lines, when $\Xi^1 = 0$ and $\Xi^2 = 0$, respectively.
These properties can be explained by the facts that the optimal
stopping problems of (\ref{VU5c1}) are reduce to the ones for
the processes $S/X$ and $Q/X$ as in \cite{SS2} and \cite{Gapmax24},
under either $\Xi^1 = 0$ or $\Xi^2 = 0$, respectively.

The rest of the paper is organised as follows. In Section~2, we embed the original option pricing
problems of (\ref{VU5a1}) into the optimal stopping problems of (\ref{VU5c1}) for the two-dimensional continuous
Markov processes $(X, S, \Xi^1 \equiv I(\theta \le \cdot))$ and $(X, Q, \Xi^2 \equiv I(\eta \le \cdot))$
given by (\ref{X4a})-(\ref{X4b}) and (\ref{X4b})-(\ref{dX4b}) with (\ref{SQ4}) and (\ref{theta}).
It is shown that the optimal stopping times $\tau^*_i$ and $\zeta^*_i$ are the first times at which
the process $X$ reaches either lower or upper boundaries $a^*_{i,\Xi^1}(S)$ or $b^*_{i,\Xi^2}(Q)$
depending on the current values of the processes $S$ or $Q$ and either $\Xi^1$ or $\Xi^2$,
for $i = 1, 2, 3$, respectively.
In Section~3, we derive closed-form expressions for the associated value functions $V^*_{i,j}(x, s)$
and $U^*_{i,j}(x, q)$, for $i = 1, 2, 3$ and $j = 0, 1$, as solutions to the equivalent free-boundary problems. We then apply the modified normal-reflection conditions at the edges of the three-dimensional state spaces for $(X, S, \Xi^1)$ or $(X, Q, \Xi^2)$, in order to characterise the optimal stopping boundaries $a^*_{i,\Xi^1}(S)$ and $b^*_{i,\Xi^2}(Q)$, for $i = 1, 2, 3$, as either the maximal or minimal solutions to the resulting first-order nonlinear ordinary differential equations.
In Section~4, by using the change-of-variable formula with local time on surfaces from Peskir \cite{Pe1a}, we verify that the solutions of the free-boundary problems provide the solutions of the original optimal stopping problems. The main results of the paper are stated in Theorems~\ref{lem21} and \ref{thm41}.

    \section{\dot Preliminaries}

     In this section, we introduce the setting and notation of the two-dimensional optimal stopping problems which are related to the pricing of perpetual American standard and lookback put and call options under the progressively enlarged filtrations $\GG^k$, for $k = 1, 2$, described above and formulate the equivalent free-boundary problems.


 \subsection{The model with progressively enlarged filtrations.}
    In order to handle the expectations in (\ref{VU5a1}), let us now introduce
    the conditional survival processes $Z=(Z_t)_{t \ge 0}$ and $Y=(Y_t)_{t \ge 0}$ defined by
    $Z_t = \PP(\theta > t \, | \, {\cal F}_t)$ and $Y_t = \PP(\eta > t \, | \, {\cal F}_t)$,
    for all $t \ge 0$, respectively. Note that the processes $Z$ and $Y$ are called the Az\'ema
    supermartingales of the random times $\theta$ and $\eta$ (see, e.g. \cite[Section~1.2.1]{MY}).
    By using the fact that the global maximum of a Brownian motion with the drift coefficient $(r - \delta)/\sigma^2 - 1/2 < 0$ has an exponential distribution with the mean $1/(2(\delta - r)/\sigma^2 + 1)$, while the negative of the global minimum of a Brownian motion with the drift coefficient $(r - \delta)/\sigma^2 - 1/2 > 0$ has an exponential distribution with the mean $1/(2(r - \delta)/\sigma^2 - 1)$ (cf., e.g. \cite[Chapter~II, Exercise~3.12]{RY}), we have:
       \begin{equation}
       \label{Pi1b3}
       Z_t =
       \begin{cases}
       ({S_t}/{X_t})^{\alpha}, & \text{if} \;\; \alpha < 0 \\
       1, & \text{if} \;\; \alpha \ge 0
       \end{cases}
       \quad \text{and} \quad
       Y_t =
       \begin{cases}
       ({Q_t}/{X_t})^{\alpha}, & \text{if} \;\; \alpha > 0 \\
       1, & \text{if} \;\; \alpha \le 0
       \end{cases}
       \end{equation}
       for all $t \ge 0$, under $s = x$ and $q = x$,
       where we set $\alpha = 2 (r - \delta)/{\sigma^2} - 1$, respectively.
       The representations in (\ref{Pi1b3}) can also be obtained
	   from applying Doob's maximal equality (cf. \cite[Lemma~2.1 and Proposition~2.2]{NY})
	   to the process $X^{-\alpha} = (X^{-\alpha}_t)_{t \ge 0}$,
       which is a strictly positive continuous local martingale converging to zero at infinity.

       Moreover, it follows from standard applications of It\^o's formula (cf., e.g. \cite[Chapter~IV, Theorem~3.3]{RY})
       and the properties that the processes $S$ and $Q$ may change their values only when $X_t = S_t$ and $X_t = Q_t$, for $t \ge 0$,
       respectively, that the Az\'ema supermartingales $Z$ and $Y$ from (\ref{Pi1b3})
       admit the stochastic differentials:
       \begin{equation}
       \label{dZ3}
       dZ_t = - \alpha \, \bigg( \frac{S_t}{X_t} \bigg)^{\alpha} \, \sigma \, dB_t +
       \alpha \, I(X_t = S_t) \, \bigg( \frac{S_t}{X_t} \bigg)^{\alpha} \, \frac{dS_t}{S_t} =
       - \alpha \, \bigg( \frac{S_t}{X_t} \bigg)^{\alpha} \, \sigma \, dB_t + \alpha \, \frac{dS_t}{S_t}
       \end{equation}
       when $\alpha < 0$, and
       \begin{equation}
       \label{dY3}
       dY_t = - \alpha \, \bigg( \frac{Q_t}{X_t} \bigg)^{\alpha} \, \sigma \, dB_t +
       \alpha \, I(X_t = Q_t) \, \bigg( \frac{Q_t}{X_t} \bigg)^{\alpha} \, \frac{dQ_t}{Q_t} =
       - \alpha \, \bigg( \frac{Q_t}{X_t} \bigg)^{\alpha} \, \sigma \, dB_t + \alpha \, \frac{dQ_t}{Q_t}
       \end{equation}
       when $\alpha > 0$, respectively, where $I(\cdot)$ denotes the indicator function.

Let us now introduce the processes ${\tilde B}^1 = ({\tilde B}^1_t)_{t \ge 0}$
and ${\tilde B}^2 = ({\tilde B}^2_t)_{t \ge 0}$ defined by:
\begin{equation}
\label{Btil1}
{\tilde B}^1_t = B_t - \int_0^t \bigg( \alpha \sigma \, \frac{Z_u}{1-Z_u}
\, I(\theta \le u) - \alpha \sigma \, I(\theta > u) \bigg) \, du
\end{equation}
when $\alpha < 0$, and
\begin{equation}
\label{Btil2}
{\tilde B}^2_t = B_t - \int_0^t \bigg( \alpha \sigma \, \frac{Y_u}{1-Y_u}
\, I(\eta \le u) - \alpha \sigma \, I(\eta > u) \bigg) \, du
\end{equation}
when $\alpha > 0$, which are standard Brownian motions
under $(\GG^1, \PP)$ and $(\GG^2, \PP)$, for all $t \ge 0$, respectively, according to the results of \cite{JLC}.
In this case, using the expressions in (\ref{Pi1b3}), we obtain that the process $X$ from (\ref{X4})-(\ref{dX4})
admits the representations:
       \begin{equation}
       \label{X4a}
       X_t = x \, \exp \bigg( \int_0^t
       \Big( r - \delta + \Psi \big( (S_u/X_u)^{\alpha}, \Xi^1_u \big)
       - \frac{\sigma^2}{2} \Big) \, du + \sigma \, {\tilde B}^1_t \Big)
       \end{equation}
       and
       \begin{equation}
       \label{X4b}
       X_t = x \, \exp \bigg( \int_0^t
       \Big( r - \delta + \Phi \big( (Q_u/X_u)^{\alpha}, \Xi^2_u \big)
       - \frac{\sigma^2}{2} \Big) \, du + \sigma \, {\tilde B}^2_t \Big)
       \end{equation}
       which solve the stochastic differential equations:
       \begin{equation}
       \label{dX4a}
       dX_t = \big( r - \delta + \Psi \big( (S_t/X_t)^{\alpha}, \Xi^1_t \big) \big)
       \, X_{t} \, dt + \sigma \, X_{t} \, d{\tilde B}^1_t \quad (X_0=x)
       \end{equation}
       and
       \begin{equation}
       \label{dX4b}
       dX_t = \big( r - \delta + \Phi \big( (Q_t/X_t)^{\alpha}, \Xi^2_t \big) \big)
       \, X_{t} \, dt + \sigma \, X_{t} \, d{\tilde B}^2_t \quad (X_0=x)
       \end{equation}
       for all $t \ge 0$ and each $x > 0$ fixed. Here, we set:
       \begin{equation}
       \label{psi4}
       \Psi \big( (s/x)^{\alpha}, j \big) = \alpha \sigma \, \frac{(s/x)^{\alpha}}
       {1-(s/x)^{\alpha}} \, I(j = 1) - \alpha \sigma \, I(j = 0)
       \end{equation}
       for all $0 < x \le s$ and $j = 0, 1$, when $\alpha < 0$, and
       \begin{equation}
       \label{xi4}
       \Phi \big( (q/x)^{\alpha}, j \big) = \alpha \sigma \, \frac{(q/x)^{\alpha}}
       {1-(q/x)^{\alpha}} \, I(j = 1) - \alpha \sigma \, I(j = 0)
       \end{equation}
       for all $0 < q \le x$ and $j = 0, 1$, when $\alpha > 0$.


\subsection{The optimal stopping problems.}
Observe that the triples $(X_t, S_t, \Xi^1_t \equiv I(\theta \le t))_{t \ge 0}$ and
$(X_t, Q_t, \Xi^2_t \equiv I(\eta \le t))_{t \ge 0}$ form three-dimensional
(time-homogeneous strong) Markov processes under $(\GG^1, \PP)$ and $(\GG^2, \PP)$, respectively.
In this case, we can embed the initial pricing problems of (\ref{VU5c1}) with the linear
payoff functions $G_i(x, s)$ and $F_i(x, q)$ given after (\ref{VU5a1}) above into the
three-dimensional optimal stopping problems:
        \begin{equation}
        \label{VU5c1}
        V^*_{i,j}(x, s) = \sup_{\tau \in \GG^1}
        \EE_{x, s, j} \big[ e^{- r \tau} \, G_i(X_{\tau}, S_{\tau}) \big]
        \quad \text{and} \quad
        U^*_{i,j}(x, q) = \sup_{\zeta \in \GG^2}
        \EE_{x, q, j} \big[ e^{- r \zeta} \, F_i(X_{\zeta}, Q_{\zeta}) \big]
        \end{equation}
        where the suprema are taken over the stopping times $\tau$ and $\zeta$
       of the processes $(X, S, \Xi^1)$ and $(X, Q, \Xi^2)$, for every $i = 1, 2, 3$ and $j = 0, 1$, respectively.
       Here, $\EE_{x, s, j}$ and $\EE_{x, q, j}$ denote the expectations with respect to the probability
       measures $\PP_{x, s, j}$ and $\PP_{x, q, j}$ under which the three-dimensional Markov processes
       $(X, S, \Xi^1)$ and $(X, Q, \Xi^2)$ given by (\ref{X4a})-(\ref{dX4a}) and (\ref{X4b})-(\ref{dX4b})
       with (\ref{SQ4}) start at $(x, s, j) \in E_{1} \times \{ 0, 1 \} \equiv
       \{ (x, s) \in \RR^2 \, | \, 0 < x \le s \} \times \{ 0, 1 \}$
       and $(x, q, j) \in E_{2} \times \{ 0, 1 \} \equiv \{ (x, q) \in \RR^2 \, | \, 0 < q \le x \} \times \{ 0, 1 \}$,
       for $j = 0, 1$, respectively.
       We further obtain solutions to the optimal stopping problems in (\ref{VU5c1}) and verify below
       that the value functions $V^*_{i,j}(x, s)$ and $U^*_{i,j}(x, q)$, for $i = 1, 2, 3$ and $j = 0, 1$,
       are the solutions to the problems in (\ref{VU5c1}), and thus, give the solutions to the original
       problems in (\ref{VU5a1}), under $s = x$ and $q = x$ as well as $j = 0$, respectively.


        In order to handle the optimal stopping problems formulated in (\ref{VU5c1}),
        by means of standard applications of It\^o's formula,
        taking into account the facts that $\partial_{xx} G_i(x, s) = 0$
        and $\partial_{xx} F_i(x, q) = 0$,
        we obtain that the processes $(e^{-r t} G_i(X_t, S_t))_{t \ge 0}$
        and $(e^{-r t} F_i(X_t, Q_t))_{t \ge 0}$ admit the representations:
        \begin{align}
        \label{G51a}
        &e^{-r t} \, G_i(X_t, S_t) = G_i(x, s) \\
        \notag
        &+ \int_0^t e^{-r u} \, \Big( \partial_x G_i(X_u, S_u) \,
        \big( r - \delta + \Psi \big( (S_u/X_u)^{\alpha}, \Xi^1_u \big) \big) \, X_u
        - r \, G_i(X_u, S_u) \Big) \, I(X_u \neq S_u) \, du \\
        \notag
        &+ \int_0^t e^{-r u} \, \partial_s G_i(X_u, S_u) \, I(X_u = S_u) \, dS_u
        + \int_0^t e^{-r u} \, \partial_x G_i(X_u, S_u) \, I(X_u \neq S_u) \, \sigma X_u \, d{\tilde B}^1_u
        \end{align}
        when $\alpha < 0$, for each $0 < x \le s$, while
        \begin{align}
        \label{G51b}
        &e^{-r t} \, F_i(X_t, Q_t) = F_i(x, q) \\
        \notag
        &+ \int_0^t e^{-r u} \, \Big( \partial_x F_i(X_u, Q_u) \,
        \big( r - \delta + \Phi \big( ({Q_u}/{X_u})^{\alpha}, \Xi^2_u \big) \big) \, X_u
        - r \, F_i(X_u, Q_u) \Big) \, I(X_u \neq Q_u) \, du \\
        \notag
        &+ \int_0^t e^{-r u} \, \partial_q F_i(X_u, Q_u) \, I(X_u = Q_u) \, dQ_u
        + \int_0^t e^{-r u} \, \partial_x F_i(X_u, Q_u) \, I(X_u \neq Q_u) \, \sigma X_u \, d{\tilde B}^2_u
        \end{align}
        when $\alpha > 0$, for each $0 < q \le x$, for all $t \ge 0$, and every $i = 1, 2, 3$.
        Here, the last stochastic integral processes are continuous square integrable martingales,
        and thus, they are uniformly integrable martingales under the probability measure $\PP$,
        when $\alpha < 0$ and $\alpha > 0$, respectively.
       Note that the processes $S$ and $Q$ may change their values only at the times
       when $X_t = S_t$ and $X_t = Q_t$, for $t \ge 0$, respectively, and such times
       accumulated over the infinite horizon form the sets of the Lebesgue measure zero,
       so that the indicators in the expressions of (\ref{G51a})-(\ref{G51b}) can be ignored
       (cf. also Proof of Theorem~4.1 below for more explanations and references).
       Then, inserting $\tau$ and $\zeta$ in place of $t$ into (\ref{G51a}) and (\ref{G51b}),
       respectively, by means of Doob's optional sampling theorem
       (cf., e.g. \cite[Chapter~II, Theorem~3.2]{RY}), we get:
       \begin{align}
       \label{V3b5}
       &\EE_{x, s, j} \big[ e^{- r \tau} \, G_i(X_{\tau}, S_{\tau}) \big] = G_i(x, s) \\
       \notag
       &+ \EE_{x, s, j} \bigg[ \int_0^{\tau} e^{-r u} \, H_{i,1,\Xi^1_u}(X_u, S_u) \, du
       + \int_0^{\tau} e^{-r u} \, \partial_s G_i(X_u, S_u) \, dS_u \bigg]
       \end{align}
       when $\alpha < 0$, and
       \begin{align}
       \label{V3c5}
       &\EE_{x, q, j} \big[ e^{- r \zeta} \, F_i(X_{\zeta}, Q_{\zeta}) \big] = F_i(x, q) \\
       \notag
       &+ \EE_{x, q, j} \bigg[ \int_0^{\zeta} e^{-r u} \, H_{i,2,\Xi^2_u}(X_u, Q_u) \, du
       + \int_0^{\tau} e^{-r u} \, \partial_q F_i(Q_u, Q_u) \, dQ_u \bigg]
       \end{align}
       when $\alpha > 0$, for any stopping times $\tau$ and $\zeta$ as well as every $i = 1, 2, 3$. Here, we set:
        \begin{align}
        \label{H5b1}
        H_{i,1,j}(x, s) = \partial_x G_i(x, s) \,
        \big( r - \delta + \Psi \big( (s/x)^{\alpha}, j \big) \big) \, x - r \, G_i(x, s)
        \end{align}
        for all $0 < x \le s$, and
        \begin{align}
        \label{H5b2}
        H_{i,2,j}(x, q) = \partial_x F_i(x, q) \,
        \big( r - \delta + \Phi \big( (q/x)^{\alpha}, j \big) \big) \, x - r \, F_i(x, q)
        \end{align}
        for all $0 < q \le x$, as well as $i = 1, 2, 3$ and $j = 0, 1$, respectively.

	Observe from the expressions in (\ref{VU5c1}) and (\ref{V3b5}) with (\ref{H5b1})
    that if the process $(X, S, \Xi^1)$ starts at the point $(x, s, 1)$
	with $s \le \underline{s}_{3,0} = L_3$ at which $S$ attains its global maximum,
	then $\tau^*_3 = \infty$, while if it starts at $(x, s, 1)$ with
	$s > s_{3,0} = L_3$, then $\tau^*_3 = 0$.
    It is also seen from the expressions in (\ref{VU5c1}) and (\ref{V3c5}) with (\ref{H5b2})
    that if the process $(X, Q, \Xi^2)$ starts at the point $(x, q, 1)$
	with $q \ge \overline{q}_{3,0} = K_3$ at which $Q$ attains its global minimum,
	then $\zeta^*_3 = \infty$, while if it starts at $(x, q, 1)$ with
	$q < \overline{q}_{3,0} = K_3$, then $\zeta^*_3 = 0$.

\subsection{The structure of optimal exercise times.}
        Let us now determine the structure of the optimal stopping times
        at which the holders should exercise the contracts with the values
        in (\ref{VU5c1}). We summarise all the related arguments in the
        following assertion.

\begin{theorem} \label{lem21}
Let the processes $(X, S, \Xi^1 \equiv I(\theta \le \cdot))$ and
$(X, Q, \Xi^2 \equiv I(\eta \le \cdot))$ be given by (\ref{X4a})-(\ref{X4b}) and
(\ref{dX4a})-(\ref{dX4b}) with (\ref{SQ4})-(\ref{theta}) and some $r > 0$,
$\delta > 0$, and $\sigma > 0$ fixed, while the inequality
$\delta' \equiv \delta + \alpha \sigma \equiv 2 r - \delta - \sigma^2 > 0$ be satisfied.
Then, the optimal exercise times for the perpetual American standard and lookback
put and call options with the values in (\ref{VU5a1}) and (\ref{VU5c1}) and the payoff
functions $G_i(x, s)$ and $F_i(x, q)$ given after (\ref{VU5a1}) have the structure:
\begin{equation}
\label{tau*}
\tau^*_i = \inf \big\{t \ge 0 \; \big| \; X_t \le a^*_{i,\Xi^1_t}(S_t) \big\}
\quad \text{and} \quad
\zeta^*_i = \inf \big\{t \ge 0 \; \big| \; X_t \ge b^*_{i,\Xi^2_t}(Q_t) \big\}
\end{equation}
under $\alpha < 0$ and $\alpha > 0$, for $i = 1, 2, 3$, respectively,
where the optimal exercise boundaries $a^*_{i,j}(s)$ and $b^*_{i,j}(q)$
in (\ref{tau*}) represent some functions satisfying the inequalities
$a^*_{i,j}(s) < g^*_{i}(s) \wedge {\overline a}_{i,j}(s) \wedge s$,
for $s > {\underline s}_{i,j}$,
and $b^*_{i,j}(q) > q \vee {\underline b}_{i,j}(q) \vee h^*_{i}(q)$,
for $0 < q < {\overline q}_{i,j}$, while the lower and upper bounds for
$g^*_i(s) \wedge a^*_{i,j}(s)$ and $b^*_{i,j}(q) \vee h^*_{i}(q)$ as well as
the values ${\underline s}_{i,j}$ and ${\overline q}_{i,j}$ specified below
are determined as follows, for every $i = 1, 2, 3$ and $j = 0, 1$:

(i') we have the equalities $g^*_{1}$ given by (\ref{gh1}),
${\overline a}_{1,0} = r L_1/\delta'$, $\underline{s}_{1,0} = L_1$, $g^*_{2}(s) = {\underline \lambda} s$
with ${\underline \lambda}$ being the unique solution of (\ref{h71la}),
${\overline a}_{2,0}(s) = r s/(\delta' L_2)$, for $s > {\underline s}_{2,0} = 0$, and
$g^*_{3}(s)$ is the maximal solution of (\ref{h73}), for each $s > {\underline s}_{3,0} = L_3$,
under $\alpha < 0$;

(i'') the value ${\overline a}_{i,1}(s)$ is the maximal solution of the arithmetic
equation $H_{i,1,1}(a_{i,1}(s), s) = 0$, for each $s > {\underline s}_{i,1}$ with
some ${\underline s}_{i,1}$ implicitly specified below, with $H_{i,1,1}(x, s)$
given by (\ref{H5b1}), for all $0 < x \le s$ and every $i = 1, 2$, under $\alpha < 0$;

(ii') we have the equalities $h^*_1$ given by (\ref{gh1}), ${\underline b}_{1,0} = r K_1/\delta'$,
$\overline{q}_{1,0} = K_1$, $h^*_{2}(q) = {\overline \nu} q$ with ${\overline \nu}$ being the unique solution
of (\ref{h71la}), ${\underline b}_{2,0}(q) = r q/(\delta' K_2)$, for $q < {\overline q}_{2,0} \equiv \infty$,
and $h^*_{3}(q)$ is the minimal solution of (\ref{h74}), for each $q < {\overline q}_{3,0} = K_3$, under $\alpha > 0$;

(ii'') the value ${\underline b}_{i,1}(q)$ is the minimal solution
of the arithmetic equation $H_{i,2,1}(b_{i,1}(q), q) = 0$, for each
$q < {\overline q}_{i,1}$ with some ${\overline q}_{i,1}$ implicitly
specified below, with $H_{i,2,1}(x, q)$ given by (\ref{H5b2}),
for all $0 < q \le x$ and every $i = 1, 2$, under $\alpha > 0$.
\end{theorem}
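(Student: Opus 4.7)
The plan is to exploit the semimartingale decompositions (\ref{V3b5}) and (\ref{V3c5}). By the optional sampling theorem, the objective $\EE_{x,s,j}[e^{-r\tau} G_i(X_\tau, S_\tau)]$ equals the immediate payoff $G_i(x,s)$ plus the expected integral $\EE_{x,s,j}\int_0^\tau e^{-ru} H_{i,1,\Xi^1_u}(X_u, S_u)\,du$ plus the expected boundary term $\EE_{x,s,j}\int_0^\tau e^{-ru} \partial_s G_i(X_u, S_u)\,dS_u$ supported on $\{X=S\}$, and symmetrically for the call problem. Because $(X, S, \Xi^1)$ and $(X, Q, \Xi^2)$ are time-homogeneous strong Markov under the enlarged filtrations, the classical dynamic-programming characterisation then identifies the continuation region as $\{V^*_{i,j} > G_i\}$ (respectively $\{U^*_{i,j} > F_i\}$) and the stopping region as its complement, reducing the task to showing that the latter has the stated threshold form in $x$.

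To locate the upper bound $\overline{a}_{i,j}(s)$ on the optimal put boundary, I would substitute the payoffs $G_i$ and the drift correction (\ref{psi4}) into (\ref{H5b1}). For $j=0$, where $\Psi((s/x)^\alpha, 0) = -\alpha\sigma$, the function $H_{i,1,0}(x,s)$ is affine in $x$ with leading coefficient $\delta' = \delta + \alpha\sigma > 0$; direct computation then yields the unique zero $\overline{a}_{1,0} = rL_1/\delta'$ and $\overline{a}_{2,0}(s) = rs/(\delta' L_2)$ announced in (i'). For $j=1$, the nonlinear contribution $\alpha\sigma (s/x)^\alpha/(1-(s/x)^\alpha)$ diverges as $x\to s^-$, and an intermediate-value argument on the monotone branches of $H_{i,1,1}(x,s)$ in $x$ yields the maximal root $\overline{a}_{i,1}(s)$ claimed in (i''). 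The call-side bounds $\underline{b}_{i,j}(q)$ are obtained symmetrically from $H_{i,2,j}$ under $\alpha>0$. Combining this with the monotonicity of $G_i$ in $x$ (decreasing for puts), stopping is sub-optimal whenever either $H_{i,1,j}(x,s) > 0$ (continuation strictly adds expected reward) or $G_i(x,s) \le 0$ (continuation is costless and the payoff is non-positive), forcing the stopping set to be contained in $\{x \le \overline{a}_{i,j}(s) \wedge g^*_i(s) \wedge s\}$; a connectedness argument based on the single-sign-change of $H_{i,1,j}$ in $x$ for each fixed $(s,j)$ upgrades the inclusion to the threshold form $\{x \le a^*_{i,j}(S_t)\}$ in (\ref{tau*}). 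For $i=1,2$ under $j=0$, scale-homogeneity of the problem in $s/x$ produces the Shepp-Shiryaev type reduction giving $g^*_1 = L_1$ and $g^*_2(s) = \underline{\lambda}\, s$ with $\underline{\lambda}$ the smooth-fit parameter solving (\ref{h71la}).

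The hardest point is the fixed-strike lookback case $i=3$. Here $\partial_x G_3 \equiv 0$ forces $H_{3,1,0}(x,s) = -r(s-L_3) \le 0$ on the entire active region $s > L_3 = \underline{s}_{3,0}$, so the drift term alone would suggest immediate stopping; however, $\partial_s G_3 \equiv 1$ makes the local-time term $\int_0^\tau e^{-ru}\,dS_u$ a strictly positive reward at every excursion of $X$ that updates $S$. The candidate boundary $g^*_3(s)$ must therefore be selected so as to balance the drift loss against the running-maximum gain via a normal-reflection condition at $\{X=S\}$, which leads to the first-order nonlinear ODE (\ref{h73}); the selection of the \emph{maximal} admissible solution (and analogously the \emph{minimal} one for the call in (\ref{h74})) is then justified by a verification argument using Peskir's change-of-variable formula with local time on surfaces, deferred to Section~4, together with the admissibility constraint $g^*_3(s) \le s$ and the boundary behaviour imposed at the singular edge $s = L_3$.
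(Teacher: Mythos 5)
Your proposal reproduces part of the paper's argument but misses its central structural step and contains a conflation that would derail the $i=3$ case. The paper's proof has three parts: (i) a filtration-comparison argument, (ii) a sign analysis of $H_{i,k,j}$, and (iii) a monotonicity argument giving the threshold form. You have (ii) essentially right (the explicit zeros $\overline{a}_{1,0}=rL_1/\delta'$, $\overline{a}_{2,0}(s)=rs/(\delta' L_2)$, the intermediate-value argument for $\overline{a}_{i,1}(s)$, and the observation that $H_{i,1,\Xi^1}\ge 0$ forces continuation), and your sketch of (iii) via "connectedness" is in the right spirit, though the paper makes it precise by noting that the process started at $(x',s)$ with $x'<x$ must pass through $(x,s)$ before reaching the diagonal where $S$ can change, whence $V^*_{i,j}(x',s)\le V^*_{i,j}(x,s)$; a bare "single-sign-change of $H$" claim does not by itself yield down-connectedness of the stopping set in $x$.

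The genuine gap is part (i): in the paper, the bounds $g^*_i(s)$ and $h^*_i(q)$ are \emph{not} derived from within the enlarged problem. They are the optimal boundaries of the auxiliary problems (\ref{VU7a1}) posed over $\FF$-stopping times (Proposition \ref{CorA1}), and they enter Theorem \ref{lem21} through the inclusion $\FF\subseteq\GG^k$, which gives $\overline{V}_i(x,s)\le V^*_{i,j}(x,s)$ and hence $D^*_{i,1,j}\subseteq D'_{i,1}$, i.e. $a^*_{i,j}(s)\le g^*_i(s)$. Your proposal never invokes this comparison; instead you attempt to produce $g^*_1$ and $g^*_2$ by scale-homogeneity and smooth fit inside the enlarged problem, which is both unnecessary and incorrect in detail: $g^*_1=\gamma_2 L_1/(\gamma_2-1)$ by (\ref{gh1}), not $L_1$ (that is $\underline{s}_{1,0}$), and the $i=1$ problem is not homogeneous in $s/x$. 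Likewise, in your third paragraph you treat $g^*_3(s)$ and the ODE (\ref{h73}) as if they characterised the optimal boundary of the enlarged problem via the normal-reflection condition; in fact (\ref{h73}) (with exponents $\gamma_l$) characterises the auxiliary $\FF$-boundary, which serves here only as an upper bound, while the actual boundary $a^*_{3,0}(s)$ solves the distinct equation (\ref{g73}) (with exponents $\beta_l$) and is constructed in Section 3, not in this theorem. Without the filtration comparison, the inequalities $a^*_{i,j}(s)<g^*_i(s)\wedge\overline{a}_{i,j}(s)\wedge s$ and $b^*_{i,j}(q)>q\vee\underline{b}_{i,j}(q)\vee h^*_i(q)$ asserted in the statement are not established.
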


\begin{proof}
        It follows from the general results of the optimal stopping theory for Markov processes
        (cf., e.g. \cite[Chapter~I, Subsection~2.2]{PSbook})) that the continuation regions related
        to the problems in (\ref{VU5c1}) above have the form:
        \begin{align}
        \label{C71}
        &C^*_{i,1,j} = \big\{ (x, s) \in E_{1} \; \big| \; V^*_{i,j}(x, s) > G_i(x, s) \big\}
        \end{align}
        and
        \begin{align}
        \label{C72}
        &C^*_{i,2,j} = \big\{ (x, q) \in E_{2} \; \big| \; U^*_{i,j}(x, q) > F_i(x, q) \big\}
        \end{align}
while the appropriate stopping regions are given by:
        \begin{align}
        \label{D71}
        &D^*_{i,1,j} = \big\{ (x, s) \in E_{1} \; \big| \; V^*_{i,j}(x, s) = G_i(x, s) \big\}
        \end{align}
        and
        \begin{align}
        \label{D72}
        &D^*_{i,2,j} = \big\{ (x, q) \in E_{2} \; \big| \; U^*_{i,j}(x, q) = F_i(x, q) \big\}
        \end{align}
        for every $i = 1, 2, 3$ and $j = 0, 1$.
        It is seen from the results of Theorem~\ref{thm41} proved below that the value functions
        $V^*_{i,j}(x, s)$ and $U^*_{i,j}(x, q)$ are continuous, so that the sets $C^*_{i,k,j}$,
        for $k = 1, 2$, in (\ref{C71})-(\ref{C72}) are open, while the sets $D^*_{i,k,j}$,
        for $k = 1, 2$, in (\ref{D71})-(\ref{D72}) are closed, for every $i = 1, 2, 3$ and $j = 0, 1$
        (see Figures~1-8 below for the computer drawings of the continuation and stopping regions
        $C^*_{i,k,j}$ and $D^*_{i,k,j}$, for $i = 1, 2, 3$, $k = 1, 2$ and $j = 0, 1$).
        We now describe the structure of the continuation and stopping regions from
        (\ref{C71})-(\ref{C72}) and (\ref{D71})-(\ref{D72}) in the following parts of the proof.

        \vspace{3pt}

        {\bf (i)} Let us first show that the stopping regions $D^*_{i,k,j}$,
        for $k = 1, 2$, in (\ref{D71})-(\ref{D72}) are nonempty, for every
        $i = 1, 2, 3$ and $j = 0, 1$, respectively.
        For this purpose, we consider the optimal stopping problems with the
        value functions in (\ref{VU7a1}) below.
        Since the suprema in the expressions of (\ref{VU7a1}) are taken over
        the stopping times with respect to the filtration $\FF$ which is
        obviously included in $\GG^k$, for $k = 1, 2$, we may easily conclude
        that the equalities ${\overline V}_i(x, s) \le V^*_{i,j}(x, s)$ and
        ${\overline U}_i(x, q) \le U^*_{i,j}(x, q)$ hold for the value functions in (\ref{VU5c1}),
        for all $0 < x \le s$ and $0 < q \le x$, and every $i = 1, 2, 3$ and $j = 0, 1$.
        Then, the standard comparison arguments applied for the value
        functions mentioned above imply that, for any point $(x, s) \in D^*_{i,1,j}$,
        we have ${\overline V}_i(x, s) \le V^*_{i,j}(x, s) \le G_i(x, s)$, so that
        $(x, s) \in D_{i,1}'$ holds, while, for any point $(x, q) \in D^*_{i,2,j}$,
        we have ${\overline U}_i(x, q) \le U^*_{i,j}(x, q) \le F_i(x, q)$, we have
        so that $(x, q) \in D_{i,2}'$ holds.
        These facts mean that all the points $(x, s)$ of the stopping region $D^*_{i,1,j}$
        in (\ref{D71}) surely belong to the stopping region $D_{i,1}'$ in (\ref{D75}), while
        all the points $(x, q)$ of the stopping region $D^*_{i,2,j}$ in (\ref{D72})
        surely belong to the stopping region $D_{i,2}'$ in (\ref{D75}),
        for every $i = 1, 2, 3$ and $j = 0, 1$.


        \vspace{3pt}

       {\bf (ii)} Observe that, by virtue of properties of the running maximum $S$ and minimum $Q$ from
        (\ref{SQ4}) of the generalised geometric Brownian motion $X$ from (\ref{X4a})-(\ref{X4b})
        (cf., e.g. \cite[Subsection~3.3]{DSS}, \cite[Subsection~3.3]{Pmax},
        or the arguments of our previous paper \cite[Theorem~2.1]{GL2}),
               it follows that the points $(x, s)$ of the diagonal
       $d_{1} = \{ (x, s) \in E_{1} \, | \, x = s \}$ belong
       to the continuation region $C^*_{i,1,0}$ in (\ref{C71}),
       while the points $(x, q)$ of the diagonal
       $d_{2} = \{ (x, q) \in E_{2} \, | \, x = q \}$ belong
       to the continuation region $C^*_{i,2,0}$ from (\ref{C72}),
       for every $i = 1, 2, 3$.




        Furthermore, it follows from the structure of the functions $H_{i,1,j}(x, s)$ and
        $H_{i,2,j}(x, q)$, for $j = 0, 1$, in (\ref{H5b1}) and (\ref{H5b2}) that it is not optimal to exercise
        the perpetual American standard or lookback put and call options (under progressively enlarged
        observations) when $H_{i,1,\Xi^1_t}(X_t, S_t) \ge 0$ or $H_{i,2,\Xi^2_t}(X_t, Q_t) \ge 0$ holds,
        for any $t \ge 0$, for every $i = 1, 2, 3$, respectively.
        In other words, these facts mean that all the points of the set $\{ (x, s) \in E_{1} \, | \,
        H_{i,1,j}(x, s) \ge 0 \}$ belong to the continuation region $C^*_{i,1,j}$ in (\ref{C71}),
        while all the points of the set $\{ (x, q) \in E_{2} \, | \, H_{i,2,j}(x, q) \ge 0 \}$
        belong to the continuation region $C^*_{i,2,j}$ in (\ref{C72}),
        for every $i = 1, 2, 3$ and $j = 0, 1$, respectively.


        In particular, the inequality $H_{1,1,0}(x, s) \equiv \delta' x - r L_1 \ge 0$
        is satisfied if and only if ${\overline a}_{1,0} \equiv r L_1/\delta' \le x \le s$ holds,
        the inequality $H_{2,1,0}(x, s) \equiv \delta' L_2 x - r s \ge 0$ is satisfied if and only if
        ${\overline a}_{2,0}(s) \equiv r s/(\delta' L_2) \le x \le s$ holds, while the inequality
        $H_{3,1,0}(x, s) \equiv r (L_3 - s) \ge 0$ is satisfied if and only if $0 < x \le s \le L_3$ holds.
        In other words, the points $(x, s)$ such that ${\overline a}_{i,0}(s) \le x \le s$ and
        $s > {\underline s}_{i,0}$, with $s_{1,0} = L_1$, $s_{2,0} = 0$ and $s_{3,0} = L_3$,
        belong to the region $C^*_{i,1,0}$ in (\ref{C71}), for every $i = 1, 2, 3$.
        Moreover, the inequality $H_{1,2,0}(x, q) \equiv r K_1 - \delta' x \ge 0$ is satisfied
        if and only if $q \le x \le {\underline b}_{1,0} \equiv r K_1/\delta'$ holds, the inequality
        $H_{2,2,0}(x, q) \equiv r q - \delta' K_2 x \ge 0$ is satisfied if and only if
        $q \le x \le {\underline b}_{2,0}(q) \equiv r q/(\delta' K_2)$ holds,
        while the inequality $H_{3,2,0}(x, q) \equiv r (q - K_3) \ge 0$ is satisfied if and only if
        $x \ge q \ge K_3$ holds.
        In other words, the points $(x, q)$ such that $q \le x \le {\underline b}_{i,0}(q)$ and
        $q < {\overline q}_{i,0}$, with $q_{1,0} = K_1$, $q_{2,0} = \infty$ and $q_{3,0} = K_3$,
        belong to the region $C^*_{i,2,0}$ in (\ref{C72}), for every $i = 1, 2, 3$
        (see Figures~1-4 below for the computer drawings of the boundary estimates ${\overline a}_{1,0}$
        and ${\underline b}_{1,0}$ as well as ${\overline a}_{2,0}(s)$ and ${\underline b}_{2,0}(q)$).

        Furthermore, the inequality $H_{i,1,1}(x, s) \ge 0$ is satisfied if and only if ${\overline a}_{i,1}(s) \le x \le s$ holds,
        so that the points $(x, s)$ such that ${\overline a}_{i,1}(s) \le x \le s$ and $s > {\underline s}_{i,1}$ belong to the
        region $C^*_{i,1,1}$ in (\ref{C71}), for every $i = 1, 2, 3$, where we set ${\overline a}_{3,1}(s) [= a^*_{3,1}(s)] = s$.
        Moreover, the inequality $H_{i,2,1}(x, q) \ge 0$ is satisfied if and only if $q \le x \le {\underline b}_{i,1}$ holds,
        so that the points $(x, q)$ such that $q \le x \le {\underline b}_{i,1}(q)$ and $q < {\overline q}_{i,1}$ belong to the
        region $C^*_{i,2,1}$ in (\ref{C72}), for every $i = 1, 2$, where we set ${\underline b}_{3,1}(q) [= b^*_{3,1}(q)] = q$.

        \vspace{3pt}

        {\bf (iii)} On the one hand, we observe that, if we take some $(x, s) \in D^*_{i,1,j}$ from (\ref{D71}) such that $x > g^*_{i}(s)$
        with $g^*_{i}(s)$ specified in Corollary~5.1 below and use the fact that the process $(X, S)$ started at some $(x', s)$ such that $g^*_{i}(s) \le x' < x$ passes through the point $(x, s)$ before hitting the diagonal $d_{1} = \{ (x, s) \in E_{1} \, | \, x = s \}$,
        then we see that the equality in (\ref{VU5c1}) implies that $V^*_{i,j}(x', s) \le V^*_{i,j}(x, s) = G_i(x, s)$
        holds, so that $(x', s) \in D^*_{i,1,j}$, for every $i = 1, 2, 3$ and $j = 0, 1$.
        Moreover, if we take some $(x, q) \in D^*_{i,2,j}$ from (\ref{D72})
        such that $x < h^*_{i}(q)$ with $h^*_{i}(q)$ specified in Proposition~5.1 above
        and use the fact that the process $(X, Q)$ started at some $(x', q)$ such that
        $h^*_{i}(q) \ge x' > x$ passes through the point $(x, q)$ before hitting the diagonal
        $d_{2} = \{ (x, q) \in E_{2} \, | \, x = q \}$, then we see that the equality in (\ref{VU5c1})
        implies that $U^*_{i,j}(x', q) \le U^*_{i,j}(x, q) = F_i(x, q)$ holds,
        so that $(x', q) \in D^*_{i,2,j}$, for every $i = 1, 2, 3$ and $j = 0, 1$.

        On the other hand, if take some $(x, s) \in C^*_{i,1,j}$ from (\ref{C71})
        such that $x < {\overline a}_{i,j}(s)$ and use the fact that the process $(X, S)$
        started at $(x, s)$ passes through some point $(x'', s)$ such that
        $x < x'' \le {\overline a}_{i,j}(s)$ before hitting the diagonal $d_{1}$, when $j = 0$, or coming
        towards the diagonal $d_{1}$, when $j = 1$, then we see that the equality in (\ref{VU5c1}) yields
        that $V^*_{i,j}(x'', s) \ge V^*_{i,j}(x, s) > G_i(x, s)$ holds, so that
        $(x'', s) \in C^*_{i,1,j}$, for every $i = 1, 2, 3$ and $j = 0, 1$.
        Moreover, if we take some $(x, q) \in C^*_{i,2,j}$ from (\ref{C72})
        and use the fact that the process $(X, Q, \Xi^2)$ started at $(x, q)$ passes through
        some point $(x'', q)$ such that ${\underline b}_{i,j}(q) \le x'' < x$ before hitting
        the diagonal $d_{2}$, when $j = 0$, or coming towards the diagonal $d_{2}$, when $j = 1$,
        then the equality in (\ref{VU5c1}) yields that $U^*_{i,j}(x'', q) \ge U^*_{i,j}(x, q) > F_i(x, q)$
        holds, so that $(x'', q) \in C^*_{i,2,j}$, for every $i = 1, 2, 3$ and $j = 0, 1$.

        Hence, we may conclude that there exist functions $a^*_{i,j}(s)$ and $b^*_{i,j}(q)$
        satisfying the inequalities $a^*_{i,j}(s) < g^*_i(s) \wedge {\overline a}_{i,j}(s) \wedge s$,
        for all $s > {\underline s}_{i,j}$, and $b^*_{i,j}(q) > q \vee {\underline b}_{i,j}(q) \vee h^*_{i}(q)$,
        for all $q < {\overline q}_{i,j}$, as well as the equalities
        $a^*_{1,j}(s) = s$, $a^*_{3,j}(s) = 0$, for all $s \le {\underline s}_{i,j}$, and
        $b^*_{1,j}(q) = q$, $b^*_{3,j}(q) = \infty$, for all $q \ge {\overline q}_{i,j}$,
        such that the continuation regions $C^*_{i,k,j}$, for $k = 1, 2$, in (\ref{C71})-(\ref{C72}) have the form:
        \begin{align}
        \label{C73}
        &C^*_{i,1,j} = \big\{ (x, s) \in E_{1} \; \big| \; a^*_{i,j}(s) < x \le s \big\}
        \;\; \text{and} \;\;
        C^*_{i,2,j} = \big\{ (x, q) \in E_{2} \; \big| \; q \le x < b^*_{i,j}(q) \big\}
        \end{align}
        while the stopping regions $D^*_{i,k,j}$, for $k = 1, 2$, in (\ref{D71})-(\ref{D72}) are given by:
        \begin{align}
        \label{D73}
        &D^*_{i,1,j} = \big\{ (x, s) \in E_{1} \; \big| \; x \le a^*_{i,j}(s) \big\}
        \;\; \text{and} \;\;
        D^*_{i,2,j} = \big\{ (x, q) \in E_{2} \; \big| \; x \ge b^*_{i,j}(q) \big\}
        \end{align}
        for every $i = 1, 2, 3$ and $j = 0, 1$, respectively
        (see Figures~1-8 below for the computer drawings of the optimal stopping
        boundaries $a^*_{i,j}(s)$ and $b^*_{i,j}(q)$, for $i = 1, 2, 3$ and $j = 0, 1$). $\square$
\end{proof}

\newpage

      \subsection{The free-boundary problems.}
      By means of standard arguments based on the application of It\^o's formula,
      it is shown that the infinitesimal operator $\LL$ of the process $(X, S, \Xi^1)$
      or $(X, Q, \Xi^2)$ from (\ref{X4a})-(\ref{X4b}) and (\ref{dX4a})-(\ref{dX4b})
      with (\ref{SQ4})-(\ref{theta}) has the form:
       \begin{align}
       \label{LF}
       &\LL = \big( r - \delta + \big[ \; \text{either} \;\; \Psi \big( (s/x)^{\alpha}, j \big)
       \;\; \text{or} \;\; \Phi \big( (q/x)^{\alpha}, j \big) \; \big] \big)
       \, x \, \partial_x + \frac{\sigma^2 x^2}{2} \, \partial_{xx} \\
       \notag
       &\phantom{\LL = \;\:} \text{in} \;\; 0 < x < s \;\; \text{or} \;\; 0 < q < x \\
       \label{LFd}
       &\partial_s = 0 \;\; \text{at} \;\; 0 < x = s \;\;
       \text{or} \;\; \partial_q = 0 \;\; \text{at} \;\; 0 < x = q \;\;
       \text{with} \;\; j = 0
       \end{align}
       where the functions $\Psi((s/x)^{\alpha}, j)$ and $\Phi((q/x)^{\alpha}, j)$, for $j = 0, 1$,
       are given by (\ref{psi4}) and (\ref{xi4}), respectively
      (cf., e.g. \cite[Subsection~3.1]{Pmax}).
      In order to find analytic expressions for the unknown
      value functions $V^*_{i,j}(x, s)$ and $U^*_{i,j}(x, q)$ from (\ref{VU5c1})
      and the unknown boundaries $a^*_{i,j}(s)$ and $b^*_{i,j}(q)$ from (\ref{C73}) and (\ref{D73}),
      for every $i = 1, 2, 3$ and $j = 0, 1$,
      we apply the results of general theory for solving optimal stopping problems
      for Markov processes presented in
      \cite[Chapter~IV, Section~8]{PSbook} among others (cf. also
      \cite[Chapter~V, Sections~15-20]{PSbook} for optimal stopping problems
      for maxima processes and other related references).

      More precisely, for the original
      optimal stopping problems in (\ref{VU5c1}), we formulate the associated
      free-boundary problems (cf., e.g.
      \cite[Chapter~IV, Section~8]{PSbook}) and then verify in Theorem~\ref{thm41}
      below that the appropriate candidate solutions of the latter problems
      coincide with the solutions of the original problems.
      In other words, we reduce the optimal stopping problems of (\ref{VU5c1})
      to the following equivalent free-boundary problems:
       \begin{align}
       \label{LV1}
       &({\LL} V_{i,j} - r V_{i,j})(x, s) = 0 \;\; \text{for} \;\;
       (x, s, j) \in C_{i,1,j} \setminus \{ (x, s) \in E_{1} \; | \; x = s \} \\
       \label{LV2}
       &({\LL} U_{i,j} - r U_{i,j})(x, q) = 0 \;\; \text{for} \;\;
       (x, q, j) \in C_{i,2,j} \setminus \{ (x, q) \in E_{2} \; | \; x = q \} \\
       \label{CF}
       &V_{i,j}(x, s) \big|_{x = a_{i,j}(s)+} = G_i(x, s) \big|_{x = a_{i,j}(s)},
       \;\;
       U_{i,j}(x, q) \big|_{x = b_{i,j}(q)-} = F_i(x, q) \big|_{x = b_{i,j}(q)} \\
       \label{SF}
       &\partial_x V_{i,j}(x, s) \big|_{x = a_{i,j}(s)+} =
       \partial_x G_i(x, s) \big|_{x = a_{i,j}(s)}, \;\;
       \partial_x U_{i,j}(x, q) \big|_{x = b_{i,j}(q)-} =
       \partial_x F_i(x, q) \big|_{x = b_{i,j}(q)} \\
       \label{NR1}
       &\big( \partial_s V_{i,0}(x, s) - \big( V_{i,1}(x, s)
       - V_{i,0}(x, s) \big) \, (\alpha/s) \big) \big|_{x = s-} = 0,
       \;\; V_{i,1}(x, s) \big|_{x = s-} = V_{i,0}(x, s) \big|_{x = s-} \\
       \label{NR2}
       &\big( \partial_s U_{i,0}(x, q) - \big( U_{i,1}(x, q)
       - U_{i,0}(x, q) \big) \, (\alpha/q) \big) \big|_{x = q+} = 0,
       \;\; U_{i,1}(x, q) \big|_{x = q+} = U_{i,0}(x, q) \big|_{x = q+} \!\!\! \\
       \label{VD}
       &V_{i,j}(x, s) = G_i(x, s) \;\; \text{for} \;\; (x, s) \in D_{i,1,j}, \;\;
       U_{i,j}(x, q) = F_i(x, q) \;\; \text{for} \;\; (x, q) \in D_{i,2,j} \\
       \label{VC}
       &V_{i,j}(x, s) > G_i(x, s) \;\; \text{for} \;\; (x, s) \in C_{i,1,j},
       \;\;
       U_{i,j}(x, q) > F_i(x, q) \;\; \text{for} \;\; (x, q) \in C_{i,2,j} \\
       \label{LVD1}
       &({\LL} V_{i,j} - r V_{i,j})(x, s) < 0 \;\; \text{for} \;\; (x, s) \in D_{i,1,j} \\
       \label{LVD2}
       &({\LL} U_{i,j} - r U_{i,j})(x, q) < 0 \;\; \text{for} \;\; (x, q) \in D_{i,2,j}
       \end{align}
      where the instantaneous-stopping and the smooth-fit as well as normal-reflection
      and normal-entrance conditions of (\ref{CF})-(\ref{NR2}) are satisfied,
      for all either $s > {\underline s}_{i,j}$ or $q < {\overline q}_{i,j}$,
      as well as every $i = 1, 2, 3$ and $j = 0, 1$, respectively.
      Here, the regions $C_{i,k,j}$ and $D_{i,k,j}$ are defined as $C^*_{i,k,j}$ and
      $D^*_{i,k,j}$, for $i = 1, 2, 3$, $k = 1, 2$ and $j = 0, 1$, in (\ref{C73})-(\ref{D73})
      with the unknown functions $a_{i,j}(s)$ and $b_{i,j}(q)$ instead of $a^*_{i,j}(s)$ and $b^*_{i,j}(q)$,
      while the functions $H_{i,1,j}(x, s)$ and $H_{i,2,j}(x, q)$, for every $i = 1, 2, 3$ and $j = 0, 1$,
      are defined in (\ref{H5b1}) and (\ref{H5b2}), respectively. Observe that the superharmonic characterisation of the
value function (cf., e.g. \cite[Chapter~IV, Section~9]{PSbook})
implies that $V^*_{i,j}(x, s)$ and $U^*_{i,j}(x, q)$ are the smallest functions satisfying
(\ref{LV1})-(\ref{CF}) and (\ref{VD})-(\ref{VC}) with the boundaries $a^*_{i,j}(s)$
and $b^*_{i,j}(q)$, for every $i = 1, 2, 3$ and $j = 0, 1$, respectively.
Note that the inequalities in (\ref{LVD1}) and (\ref{LVD2}) follow directly
from the arguments of Parts~(ii)-(iii) of Proof of Theorem~\ref{lem21} above.

\newpage

\begin{picture}(160,110)
\put(20,15){\begin{picture}(120,95) \label{fig1}
\put(0,0){\line(1,0){120}} \put(120,0){\line(0,1){95}}
\put(0,0){\line(0,1){95}}
\put(0,95){\line(1,0){120}}   
\put(10,10){\vector(1,0){100}}
\put(10,10){\vector(0,1){80}}    
\put(10,10){\line(1,1){75}}
\put(5,88){$s$} \put(108,5){$x$} \put(31,5){${a^*_{1,0}}$} \put(46,5){$g^*_1$} \put(55,5){${\bar a}_{1,0}$}
\put(79,75){$d_1 = \{ x = s \}$} \put(58,9){\line(0,1){2}} \put(48,9){\line(0,1){2}} \put(35,9){\line(0,1){2}}
\put(17,44){$D^*_{1,1,0}$} \put(37,67){$C^*_{1,1,0}$}
\put(58,58){\line(0,1){27}} \put(48,48){\line(0,1){37}} \put(35,35){\line(0,1){50}}
\put(0,0.1){\put(58,58){\line(0,1){27}} \put(48,48){\line(0,1){37}} \put(35,35){\line(0,1){50}}}
\put(0,-0.1){\put(58,58){\line(0,1){27}} \put(48,48){\line(0,1){37}} \put(35,35){\line(0,1){50}}}
\qbezier[48](58,10)(58,34)(58,58) \qbezier[38](48,10)(48,29)(48,48) \qbezier[25](35,10)(35,23)(35,35)
\end{picture}}
\put(20,10){\small{{\bf Figure~1.} {\rm A computer drawing of the optimal exercise boundary $a^*_{1,0}$.}}}
\end{picture}


\begin{picture}(160,110)
\put(20,15){\begin{picture}(120,95) \label{fig2}
\put(0,0){\line(1,0){120}} \put(120,0){\line(0,1){95}}
\put(0,0){\line(0,1){95}}
\put(0,95){\line(1,0){120}}   
\put(10,10){\vector(1,0){100}}
\put(10,10){\vector(0,1){80}}    
\put(10,10){\line(1,1){75}}
\put(60,10){\line(0,1){50}} \put(34,10){\line(0,1){24}} \put(44,10){\line(0,1){34}}
\put(0,0.1){\put(60,10){\line(0,1){50}} \put(34,10){\line(0,1){24}} \put(44,10){\line(0,1){34}}}
\put(0,-0.1){\put(60,10){\line(0,1){50}} \put(34,10){\line(0,1){24}} \put(44,10){\line(0,1){34}}}
\put(5,88){$q$} \put(108,5){$x$} \put(42,5){$h^*_1$}
\put(32,5){${\underline b}_{1,0}$} \put(60,9){\line(0,1){2}} \put(58,5){$b^*_{1,0}$}
\put(79,75){$d_2 = \{ x = q \}$} \put(44,9){\line(0,1){2}}
\put(34,9){\line(0,1){2}} \put(73,49){$C^*_{1,2,0}$} \put(48,22){$D^*_{1,2,0}$}
\end{picture}}
\put(20,10){\small{{\bf Figure~2.} {\rm A computer drawing of the
optimal exercise boundary $b^*_{1,0}$.}}}
\end{picture}

\newpage

\begin{picture}(160,110)
\put(20,15){\begin{picture}(120,95) \label{fig3}
\put(0,0){\line(1,0){120}} \put(120,0){\line(0,1){98}} \put(0,0){\line(0,1){98}}
\put(0,98){\line(1,0){120}}   
\put(10,10){\vector(1,0){100}}
\put(10,10){\vector(0,1){83}}    
\put(10,10){\line(1,1){80}}   
\put(10,10){\line(1,2){38}} \put(10,10){\line(1,4){19}} \put(10,10){\line(3,4){57}}
\put(0,0.1){\put(10,10){\line(1,2){38}} \put(10,10){\line(1,4){19}} \put(10,10){\line(3,4){57}}}
\put(0,-0.1){\put(10,10){\line(1,2){38}} \put(10,10){\line(1,4){19}} \put(10,10){\line(3,4){57}}}
\put(5,91){$s$} \put(108,5){$x$} \put(35,81){$g^*_2(s)$} \put(63,60){$d_{1} = \{ x = s \}$} \put(51,81){${\bar a}_{2,0}(s)$}
\put(15,81){$a^*_{2,0}(s)$} \put(28,70){$C^*_{2, 1, 0}$} \put(13,70){$D^*_{2, 1, 0}$}
\end{picture}}
\put(18,10){\small{{\bf Figure~3.} {\rm A computer drawing of the optimal exercise boundary $a^*_{2,0}(s)$.}}}
\end{picture}

\begin{picture}(160,110)
\put(20,15){\begin{picture}(120,95) \label{fig4}
\put(0,0){\line(1,0){120}} \put(120,0){\line(0,1){95}}
\put(0,0){\line(0,1){95}}
\put(0,95){\line(1,0){120}}   
\put(10,10){\vector(1,0){100}}
\put(10,10){\vector(0,1){80}}    
\put(10,10){\line(1,1){80}}   
\put(10,10){\line(2,1){77}} \put(10,10){\line(4,1){77}} \put(10,10){\line(4,3){77}}
\put(0,0.1){\put(10,10){\line(2,1){77}} \put(10,10){\line(4,1){77}} \put(10,10){\line(4,3){77}}}
\put(0,-0.1){\put(10,10){\line(2,1){77}} \put(10,10){\line(4,1){77}} \put(10,10){\line(4,3){77}}}
\put(5,88){$q$} \put(108,5){$x$} \put(75,30){${\underline b}_{2,0}(q)$} \put(75,68){$b^*_{2,0}(q)$}
\put(60,57){$C^*_{2,2,0}$} \put(60,43){$D^*_{2,2,0}$} \put(49,75){$d_{2} = \{ x = q \}$} \put(77,49){$h^*_2(q)$}
\end{picture}}
\put(18,10){\small{{\bf Figure~4.} {\rm A computer drawing of the optimal exercise boundary $b^*_{2,0}(q)$.}}}
\end{picture}

\newpage

\begin{picture}(160,110)
\put(20,15){\begin{picture}(120,95) \label{fig5}
\put(0,0){\line(1,0){120}} \put(120,0){\line(0,1){95}}
\put(0,0){\line(0,1){95}}
\put(0,95){\line(1,0){120}}   
\put(10,10){\vector(1,0){100}}
\put(10,10){\vector(0,1){80}}    
\put(10,10){\line(1,1){80}}
\qbezier(10,10)(40,45)(58,85) \qbezier(10,10)(30,45)(38,85)
\put(0,0.1){\qbezier(10,10)(40,45)(58,85) \qbezier(10,10)(30,45)(38,85)}
\put(0,-0.1){\qbezier(10,10)(40,45)(58,85) \qbezier(10,10)(30,45)(38,85)}
\put(24,79){$a^*_{2,1}(s)$} \put(43,79){${\bar a}_{2,1}(s)$} \put(63,60){$d_{1} = \{ x = s \}$}
\put(18,60){$D^*_{2,2,1}$} \put(34,60){$C^*_{2,2,1}$} \put(5,88){$s$} \put(108,5){$x$}
\end{picture}}
\put(19,10){\small{{\bf Figure 5.} {\rm A computer drawing of the
optimal exercise boundary $a^*_{2,1}(s)$.}}}
\end{picture}

\begin{picture}(160,110)
\put(20,15){\begin{picture}(120,95) \label{fig6}
\put(0,0){\line(1,0){120}} \put(120,0){\line(0,1){95}}
\put(0,0){\line(0,1){95}}
\put(0,95){\line(1,0){120}}   
\put(10,10){\vector(1,0){100}}
\put(10,10){\vector(0,1){80}}    
\put(10,10){\line(1,1){80}} \put(49,75){$d_{2} = \{ x = q \}$}
\qbezier(10,10)(45,40)(85,58)
\qbezier(10,10)(45,30)(85,38)
\put(0,0.1){\qbezier(10,10)(45,40)(85,58) \qbezier(10,10)(45,30)(85,38)}
\put(0,-0.1){\qbezier(10,10)(45,40)(85,58) \qbezier(10,10)(45,30)(85,38)}
\put(75,60){$b^*_{2,1}(q)$} \put(75,40){${\underline b}_{2,1}(q)$}
\put(54,36){$D^*_{2,2,1}$} \put(54,50){$C^*_{2,2,1}$} \put(5,88){$q$} \put(108,5){$x$}
\end{picture}}
\put(19,10){\small{{\bf Figure 6.} {\rm A computer drawing of the
optimal exercise boundary $b^*_{2,1}(q)$.}}}
\end{picture}

   \newpage

\begin{picture}(160,110)
\put(20,15){\begin{picture}(120,95) \label{fig7}
\put(0,0){\line(1,0){120}} \put(120,0){\line(0,1){95}}
\put(0,0){\line(0,1){95}}
\put(0,95){\line(1,0){120}}   
\put(10,10){\vector(1,0){100}}
\put(10,10){\vector(0,1){80}}    
\put(10,10){\line(1,1){80}} \put(63,60){$d_{1} = \{ x = s \}$}
\qbezier(10,25)(27,35)(43,85) \qbezier(10,25)(30,30)(65,85)
\put(0,0.1){\qbezier(10,25)(27,35)(43,85) \qbezier(10,25)(30,30)(65,85)}
\put(0,-0.1){\qbezier(10,25)(27,35)(43,85) \qbezier(10,25)(30,30)(65,85)}
\put(29,79){$a^*_{3,0}(s)$} \put(51,79){$g^*_{3}(s)$} \put(9,25){\line(1,0){2}} \put(4,23){$L_3$}
\put(18,60){$D^*_{3,1,0}$} \put(36,60){$C^*_{3,1,0}$} \put(5,88){$s$} \put(108,5){$x$}
\end{picture}}
\put(19,10){\small{{\bf Figure 7.} {\rm A computer drawing of the
optimal exercise boundary $a^*_{3,0}(s)$.}}}
\end{picture}

\begin{picture}(160,110)
\put(20,15){\begin{picture}(120,95) \label{fig8}
\put(0,0){\line(1,0){120}} \put(120,0){\line(0,1){95}}
\put(0,0){\line(0,1){95}}
\put(0,95){\line(1,0){120}}   
\put(10,10){\vector(1,0){100}}
\put(10,10){\vector(0,1){80}}    
\put(10,10){\line(1,1){80}} \put(49,75){$d_{2} = \{ x = q \}$}
\qbezier(25,10)(40,30)(85,68) \qbezier(25,10)(35,25)(85,43)
\put(0,0.1){\qbezier(25,10)(40,30)(85,68) \qbezier(25,10)(35,25)(85,43)}
\put(0,-0.1){\qbezier(25,10)(40,30)(85,68) \qbezier(25,10)(35,25)(85,43)}
\put(80,60){$b^*_{3,0}(q)$} \put(80,38){$h^*_{3}(q)$} \put(25,9){\line(0,1){2}} \put(23,5){$K_3$}
\put(53,36){$D^*_{3,2,0}$} \put(53,50){$C^*_{3,2,0}$} \put(5,88){$q$} \put(108,5){$x$}
\end{picture}}
\put(19,10){\small{{\bf Figure 8.} {\rm A computer drawing of the
optimal exercise boundary $b^*_{3,0}(q)$.}}}
\end{picture}

\newpage

    \section{\dot Solutions to the free-boundary problems}

       In this section, we obtain solutions to the free-boundary problems in (\ref{LV1})-(\ref{LVD2})
       in the form of analytic expressions for the candidate value functions through
       the candidate boundaries, by considering the cases $j = 0$ and $j = 1$, separately.
       In the first case, we derive first-order nonlinear ordinary differential equations
       for the candidate optimal stopping boundaries for the underlying risky asset price
       depending on the running values of the maximum and minimum processes, respectively.
       In the second case, we derive transcendental arithmetic equations for the candidate
       boundaries depending on the observed values of the global maximum and minimum of the
       underlying asset price, respectively.


       \subsection{The candidate value functions under $j = 0$, for $i = 1, 2, 3$.}
       In this case, it is shown that the second-order ordinary differential
       equations in (\ref{LF})+(\ref{LV1})-(\ref{LV2}) have the general solutions:
       \begin{align}
       \label{V312c}
       &V_{i,0}(x, s) = C_{i,1,0}(s) \, x^{\beta_1} + C_{i,2,0}(s) \, x^{\beta_2}
       \quad \text{and} \quad
       U_{i,0}(x, q) = D_{i,1,0}(q) \, x^{\beta_1} + D_{i,2,0}(q) \, x^{\beta_2}
       \end{align}
       when $\alpha < 0$, for $0 < x \le s$, and $\alpha > 0$, for $0 < q \le x$, respectively.
       Here, $C_{i,k,0}(s)$ and $D_{i,k,0}(q)$, for $i = 1, 2, 3$ and $k = 1, 2$,
       are some arbitrary (continuously differentiable)
       functions, and $\beta_l$, for $l = 1, 2$, are given by:
       \begin{equation}
       \label{beta12}
       \beta_l = \frac{1}{2}-\frac{r-\delta'}{\sigma^2} - (-1)^l
       \sqrt{\bigg( \frac{1}{2} - \frac{r-\delta'}{\sigma^2} \bigg)^2
       + \frac{2r}{\sigma^2}}
       \end{equation}
       with $\delta' \equiv \delta + \alpha \sigma \equiv 2 r - \delta - \sigma^2$,
       so that $\beta_2 < 0 < 1 < \beta_1$ holds.
       Then, by applying the conditions of (\ref{CF}) and the left-hand sides
       of (\ref{NR1}) and (\ref{NR2}) to the functions in (\ref{V312c}), we obtain the equalities:
       \begin{align}
       \label{C32a}
       &C_{i,1,0}(s) \, a^{\beta_1}_{i,0}(s) + C_{i,2,0}(s) \, a^{\beta_2}_{i,0}(s)
       = G_i(a_{i,0}(s), s) \\
       \label{C32b}
       &\beta_1 \, C_{i,1,0}(s) \, a^{\beta_1}_{i,0}(s) + \beta_2 \, C_{i,2,0}(s) \,
       a^{\beta_2}_{i,0}(s) = a_{i,0}(s) \, \partial_x G_i(a_{i,0}(s), s) \\
       \label{C32e}
       &C_{i,1,0}'(s) \, s^{\beta_1} + C_{i,2,0}'(s) \, s^{\beta_2} = 0
       \end{align}
       for all $s > {\underline s}_{i,0}$, and
       \begin{align}
       \label{C32c}
       &D_{i,1,0}(q) \, b^{\beta_1}_{i,0}(q) + D_{i,2,0}(q) \, b^{\beta_2}_{i,0}(q)
       = F_i(b_{i,0}(q), q) \\
       \label{C32d}
       &\beta_1 \, D_{i,1,0}(q) \, b^{\beta_1}_{i,0}(q) + \beta_2 \, D_{i,2,0}(q) \, b^{\beta_2}_{i,0}(q)
       = b_{i,0}(q) \, \partial_x F_i(b_{i,0}(q), q) \\
       \label{C32f}
       &D_{i,1,0}'(q) \, q^{\beta_1} + D_{i,2,0}'(q) \, q^{\beta_2} = 0
       \end{align}
       for all $q < {\overline q}_{i,0}$, respectively.
       Hence, by solving the systems of equations in (\ref{C32a})-(\ref{C32b})
       and (\ref{C32c})-(\ref{C32d}), we obtain that the candidate value functions
       admit the representations:
       \begin{equation}
       \label{Vg32a}
       V_{i,0}(x, s; a_{i,0}(s)) = C_{i,1,0}(s; a_{i,0}(s)) \, x^{\beta_1}
       + C_{i,2,0}(s; a_{i,0}(s)) \, x^{\beta_2}
       \end{equation}
       for $a_{i,0}(s) < x \le s$ and $s > {\underline s}_{i,0}$, with
       \begin{align}
       \label{CDg32a}
       &C_{i,l,0}(s; a_{i,0}(s)) =
       \frac{\beta_{3-l} G_i(a_{i,0}(s), s) - a_{i,0}(s) \partial_x G_i(a_{i,0}(s), s)}
       {(\beta_{3-l} - \beta_{l}) a^{\beta_l}_{i,0}(s)}
       \end{align}
       for $l = 1, 2$, and
       \begin{equation}
       \label{Vg32b}
       U_{i,0}(x, q; b_{i,0}(q)) = D_{i,1,0}(q; b_{i,0}(q)) \, x^{\beta_1}
       + D_{i,2,0}(q; b_{i,0}(q)) \, x^{\beta_2}
       \end{equation}
       for $q \le x < b_{i,0}(q)$ and $q < {\overline q}_{i,0}$, with
       \begin{align}
       \label{CDg32b}
       &D_{i,l,0}(q; b_{i,0}(q)) =
       \frac{\beta_{3-l} F_i(b_{i,0}(q), q) - b_{i,0}(q) \partial_x F_i(b_{i,0}(q), q)}
       {(\beta_{3-l} - \beta_l) b^{\beta_l}_{i,0}(q)}
       \end{align}
       for $i = 1, 2, 3$ and $l = 1, 2$, respectively.

Moreover, by means of straightforward computations, it can be deduced from the expressions
in (\ref{Vg32a}) and (\ref{Vg32b}) that the first-order and second-order partial derivatives
$\partial_x V_{i,0}(x, s; a_{i,0}(s))$ and $\partial_{xx} V_{i,0}(x, s; a_{i,0}(s))$ of
the function $V_{i,0}(x, s; a_{i,0}(s))$ take the form:
\begin{align}
\label{H'4a}
\partial_x V_{i,0}(x, s; a_{i,0}(s)) &=
C_{i,1,0}(s; a_{i,0}(s)) \, \beta_1 \, x^{\beta_1-1}
+ C_{i,2,0}(s; a_{i,0}(s)) \, \beta_2 \, x^{\beta_2-1}
\end{align}
and
\begin{align}
\label{H''4a}
\partial_{xx} V_{i,0}(x, s; a_{i,0}(s)) &=
C_{i,1,0}(s; a_{i,0}(s)) \, \beta_1 (\beta_1 - 1) \, x^{\beta_1-2}
+ C_{i,2,0}(s; a_{i,0}(s)) \, \beta_2 (\beta_2 - 1) \, x^{\beta_2-2} \!\!\!
\end{align}
on the interval $a_{i,0}(s) < x \le s$, for each $s > {\underline s}_{i,0}$
and every $i = 1, 2, 3$ fixed, while
the first-order and second-order partial derivatives
$\partial_x U_{i,0}(x, q; b_{i,0}(q))$ and $\partial_{xx} U_{i,0}(x, q; b_{i,0}(q))$
of the function $U_{i,0}(x, q; b_{i,0}(q))$ take the form:
\begin{align}
\label{H'4b}
\partial_x U_{i,0}(x, q; b_{i,0}(q)) &=
D_{i,1,0}(q; b_{i,0}(q)) \, \beta_1 \, x^{\beta_1-1}
+ D_{i,2,0}(q; b_{i,0}(q)) \, \beta_2 \, x^{\beta_2-1}
\end{align}
and
\begin{align}
\label{H''4b}
\partial_{xx} U_{i,0}(x, q; b_i(q)) &=
D_{i,1,0}(q; b_{i,0}(q)) \, \beta_1 (\beta_1 - 1) \, x^{\beta_1-2}
+ D_{i,2,0}(q; b_{i,0}(q)) \, \beta_2 (\beta_2 - 1) \, x^{\beta_2-2}
\end{align}
on the interval $q \le x < b_{i,0}(q)$, for each $q < {\overline q}_{i,0}$,
and every $i = 1, 2, 3$ fixed.



\subsection{The candidate optimal stopping boundaries under $j = 0$, for $i = 1, 2, 3$.}
       It follows that, for $i = 1$, we have
       \begin{equation}
       \label{ab01}
       a^*_{1,0} = \frac{\beta_2 L_1}{\beta_2 - 1}
       \quad \text{and} \quad
       b^*_{1,0} = \frac{\beta_1 K_1}{\beta_1 - 1}
       \end{equation}
       which satisfies $a^*_{1,0} < g^*_1 \wedge {\overline a}_{1,0}$ with ${\overline a}_{1,0} \equiv r L_1/\delta'$
       and $b^*_{1,0} > {\underline b}_{1,0} \vee h^*_1$ with ${\underline b}_{1,0} \equiv r K_1/\delta'$, where
       $g^*_1$ and $h^*_1$ are given in (\ref{gh1}) below.

       By applying the conditions of (\ref{C32e}) and (\ref{C32f}) to the functions in
       (\ref{CDg32a}) and (\ref{CDg32b}), we conclude that the candidate boundaries
       $a_{i,0}(s)$ and $b_{i,0}(q)$ satisfy the first-order nonlinear ordinary
       differential equations:
       \begin{align}
       \label{dg32a}
       &a_{i,0}'(s) =
       - \frac{\partial_s C_{i,1,0}(s; a_{i,0}(s)) s^{\beta_1}
       + \partial_s C_{i,2,0}(s; a_{i,0}(s)) s^{\beta_2}}
       {\partial_{a_{i,0}} C_{i,1,0}(s; a_{i,0}(s)) s^{\beta_1} +
       \partial_{a_{i,0}} C_{i,2,0}(s; a_{i,0}(s)) s^{\beta_2}}
       \end{align}
       for $s > {\underline s}_{i,0}$, and
       \begin{align}
       \label{dg32b}
       &b_{i,0}'(q) =
       - \frac{\partial_q D_{i,1,0}(q; b_{i,0}(q)) q^{\beta_1}
       + \partial_q D_{i,2,0}(q; b_{i,0}(q)) q^{\beta_2}}
       {\partial_{b_{i,0}} D_{i,1,0}(q; b_{i,0}(q)) q^{\beta_1} +
       \partial_{b_{i,0}} D_{i,2,0}(s; b_{i,0}(s)) q^{\beta_2}}
       \end{align}
       for $q < {\overline q}_{i,0}$, for $i = 2, 3$, respectively.

       In particular, for $i = 2$, it is seen from the expressions in (\ref{dg32a})
       and (\ref{dg32b}) that $a^*_{2,0}(s) \equiv \lambda_* s$,
       for $s > {\underline s}_{2} \equiv 0$, and
       $b^*_{2,0}(q) \equiv \nu_* q$,
       for $q < \infty \equiv {\overline q}_{2}$, where the numbers $0 < \lambda_* < 1$
       and $\nu_* > 1$ provide the unique roots of the power arithmetic equations:
       \begin{align}
       \label{g71la}
       &\lambda^{\beta_{1} - \beta_{2}} =
       \frac{(\beta_1 - 1) (\beta_2 (1 - L_2 \lambda) + L_2 \lambda)}
       {(\beta_2 - 1) (\beta_1 (1 - L_2 \lambda) + L_2 \lambda)}
       \quad \text{and} \quad
       \nu^{\beta_{1} - \beta_{2}} =
       \frac{(\beta_1 - 1) (\beta_2 (1 - K_2 \nu) + K_2 \nu)}
       {(\beta_2 - 1) (\beta_1 (1 - K_2 \nu) + K_2 \nu)}
       \end{align}
       on the intervals $(0, 1)$ and $(1, \infty)$, respectively.
       Furthermore, for $i = 3$, by means of straightforward computations,
       it follows that the first-order nonlinear ordinary differential equations
       in (\ref{dg32a}) and (\ref{dg32b}) take the form:
       \begin{align}
       \label{g73}
       a_{3,0}'(s) &= \frac{a_{3,0}(s)}{s - L_3} \,
       \frac{\beta_2 (s/a^{\beta_1}_{3,0}(s)) - \beta_1 (s/a^{\beta_2}_{3,0}(s))}
       {\beta_1 \beta_2 ((s/a^{\beta_1}_{3,0}(s)) - (s/a^{\beta_2}_{3,0}(s)))}
       \end{align}
       for all $s > {\underline s}_{3,0} \equiv L_3$, and
       \begin{align}
       \label{g74}
       b_{3,0}'(q) &= \frac{b_{3,0}(q)}{K_3 - q} \,
       \frac{\beta_2 (q/b^{\beta_1}_{3,0}(q)) - \beta_1 (q/b^{\beta_2}_{3,0}(q))}
       {\beta_1 \beta_2 ((q/b^{\beta_1}_{3,0}(q)) -(q/b^{\beta_2}_{3,0}(q)))}
       \end{align}
       for all $q < {\overline q}_{3,0} \equiv K_3$, respectively.

  \subsection{The maximal and minimal admissible solutions $a^*_{3,0}(s)$ and $b^*_{3,0}(q)$.}
        We further consider the {\it maximal and minimal admissible} solutions of first-order
        nonlinear ordinary differential equations as the largest and smallest possible solutions
        $a^*_{3,0}(s)$ and $b^*_{3,0}(q)$ of the equations in (\ref{g73}) and (\ref{g74})
        which satisfy the inequalities $a^*_{3,0}(s) < s$ and $b^*_{3,0}(q) > q$,
        for all $s > {\underline s}_{3,0} \equiv L_3$ and $q < {\overline q}_{3,0} \equiv K_3$.
        By virtue of the classical results on the existence and uniqueness of solutions
        for first-order nonlinear ordinary differential equations, we may conclude that
        these equations admit (locally) unique solutions, in view of the facts that
        the right-hand sides in (\ref{g73}) and (\ref{g74}) are (locally) continuous in
        $(s, a_{3,0}(s))$ and $(q, b_{3,0}(q))$
        and (locally) Lipschitz in $a_{3,0}(s)$ and $b_{3,0}(q)$, for each
        $s > {\underline s}_{3,0}$ and $q < {\overline q}_{3,0}$ fixed
        (cf. also \cite[Subsection~3.9]{Pmax} for similar arguments based
        on the analysis of other first-order nonlinear ordinary differential equations).
        Then, it is shown by means of technical arguments based on Picard's method of
        successive approximations that there exist unique solutions $a_{3,0}(s)$ and $b_{3,0}(q)$
        to the equations in (\ref{g73}) and (\ref{g74}), for $s > {\underline s}_{3,0}$
        and $q < {\overline q}_{3,0}$, started at some points $(a_{3,0}(s_{3,0}'), s_{3,0}')$ and
        $(b_{3,0}(q_{3,0}'), q_{3,0}')$ such that $s_{3,0}' > {\underline s}_{3,0}$ and
        $q_{3,0}' < {\overline q}_{3,0}$
        (cf. also \cite[Subsection~3.2]{GP1} and \cite[Example~4.4]{Pmax} for similar arguments
        based on the analysis of other first-order nonlinear ordinary differential equations).

Hence, in order to construct the appropriate functions $a^*_{3,0}(s)$ and $b^*_{3,0}(q)$
which satisfy the equations in (\ref{g73}) and (\ref{g74}) and stays strictly above
or below the appropriate diagonal, for $s > {\underline s}_{3,0}$ and
$q < {\overline q}_{3,0}$, and every $i = 2, 3$, respectively,
we can follow the arguments from \cite[Subsection~3.5]{Pe5b}
(among others) which are based on the construction of sequences
of the so-called bad-good solutions which intersect the upper
or lower bounds or diagonals.
For this purpose, for any sequences $(s_{3,0,m})_{m \in \NN}$ and $(q_{3,0,m})_{m \in \NN}$
such that $s_{3,0,m} > {\underline s}_{3,0}$ and $q_{3,0,m} < {\overline q}_{3,0}$ as well as
$s_{3,0,m} \uparrow \infty$ and $q_{3,0,m} \downarrow 0$ as $m \to \infty$,
we can construct the sequence of solutions $a_{3,0,m}(s)$ and $b_{3,0,m}(q)$,
for $m \in \NN$, to the equations (\ref{g73}) and (\ref{g74}),
for all $s > {\underline s}_{3,0}$ and $q < {\overline q}_{3,0}$ such that
$a_{3,0,m}(s_{3,0,m}) < s_{3,0,m}$ and $b_{i,0,m}(q_{i,0,m}) > q_{i,0,m}$ holds,
for each $m \in \NN$.
It follows from the structure of the equations in (\ref{g73}) and (\ref{g74})
that the inequalities $a_{3,0,l}'(s_{3,0,m}) < 1$ and $b_{3,0,l}'(q_{3,0,m}) < 1$
should hold for the derivatives of the appropriate functions, for each $m \in \NN$
(cf. also \cite[pages~979-982]{Jesper} for the analysis of solutions of another first-order nonlinear differential equation).
Observe that, by virtue of the uniqueness of solutions mentioned above, we know that each two curves
$s \mapsto a_{3,0,m}(s)$ and $s \mapsto a_{3,0,n}(s)$ as well as $q \mapsto b_{3,0,m}(q)$ and
$q \mapsto b_{3,0,n}(q)$ cannot intersect, for $m, n \in \NN$ and $m \neq n$,
and thus, we see that the sequence $(a_{3,0,m}(s))_{m \in \NN}$ is increasing
and the sequence $(b_{3,0,m}(q))_{m \in \NN}$ is decreasing,
so that the limits $a^*_{3,0}(s) = \lim_{m \to \infty} a_{3,0,m}(s)$ and
$b^*_{3,0}(q) = \lim_{m \to \infty} b_{3,0,m}(q)$ exist,
for each $s > {\underline s}_{3,0}$ and $q < {\overline q}_{3,0}$, respectively.
We may therefore conclude that $a^*_{3,0}(s)$ and $b^*_{3,0}(q)$ provides the maximal
and minimal solutions to the equations in (\ref{g73}) and (\ref{g74}) such that
$a^*_{3,0}(s) < s$ and $b^*_{3,0}(q) > q$ holds,
for all $s > {\underline s}_{3,0}$ and $q < {\overline q}_{3,0}$.

Moreover, since the right-hand sides of the first-order nonlinear ordinary differential equations
in (\ref{g73}) and (\ref{g74}) are (locally) Lipschitz in $s$ and $q$, respectively, one can deduce
by means of Gronwall's inequality that the functions $a_{3,0,m}(s)$ and $b_{3,0,m}(q)$, for $m \in \NN$,
are continuous, so that the functions $a^*_{3,0}(s)$ and $b^*_{3,0}(q)$ are continuous too.
        The appropriate {\it maximal admissible} solutions of first-order nonlinear ordinary differential
        equations and the associated maximality principle for solutions of optimal stopping problems which
        is equivalent to the superharmonic characterisation of the payoff functions were established
        in \cite{Pmax} and further developed in \cite{GP1}, \cite{Jesper}, \cite{GuoShepp}, \cite{Gap},
        \cite{BK3}, \cite{GuoZer}, \cite{Pe5a}-\cite{Pe5b}, \cite{GHP}, \cite{Ott}, \cite{KyprOtt},
        \cite{GR1}-\cite{GR3}, \cite{RodZer}, and \cite{GKL} among other subsequent papers
        (cf. also \cite[Chapter~I; Chapter~V, Section~17]{PSbook} for other references).


\subsection{The candidate value functions under $j = 1$, for $i = 1, 2, 3$.}
In this case, the ordinary differential equations (with parameters)
in (\ref{LF})+(\ref{LV1})-(\ref{LV2}) have the general solution to
that equation takes the form:
\begin{equation}
\label{hatU1}
V_{i,1}(x, s) = C_{i,1,1}(s) \, W_{1,1}(x, s) + C_{i,1,2}(s) \, W_{1,2}(x, s)
\end{equation}
and
\begin{equation}
\label{hatU2}
U_{i,1}(x, q) = D_{i,1,1}(q) \, W_{2,1}(x, q) + D_{i,1,2}(s) \, W_{2,2}(x, q)
\end{equation}
where $C_{i,1,k}(s)$ and $D_{i,1,k}(q)$, for $i = 1, 2, 3$ and $k = 1, 2$,
are some arbitrary continuous functions.
Here, we assume that the functions $W_{1,k}(x, s)$ and $W_{2,k}(x, q)$, for $k = 1, 2$,
represent fundamental solutions to the homogeneous second-order ordinary differential
equations related to the one of (\ref{LF})+(\ref{LV1})-(\ref{LV2}) given by:
\begin{equation}
\label{hatH1}
W_{1,k}(x, s) = x^{\gamma_{3-k}} \, \big( 1 - (s/x)^{\alpha} \big)^{1-2/\sigma}
F \big( \chi_{k,2}, \chi_{k,1}; \varkappa_{k}; (s/x)^{\alpha} \big)
\end{equation}
for each $0 < x \le s$, and
\begin{equation}
\label{hatH2}
W_{2,k}(x, q) = x^{\gamma_{3-k}} \, \big( 1 - (q/x)^{\alpha} \big)^{1-2/\sigma} F \big( \chi_{k,2}, \chi_{k,1}; \varkappa_{k};
(q/x)^{\alpha} \big)
\end{equation}
for each $0 < q \le x$, and every $k = 1, 2$.
Here, we denote by $F(\alpha, \beta; \gamma; x)$ the Gauss'
hypergeometric function which is defined by means of the expansion:
\begin{equation}
\label{F41b}
F(\alpha, \beta; \gamma; x) = 1 + \sum_{m = 1}^{\infty}
\frac{(\alpha)_m (\beta)_m}{(\gamma)_m} \, \frac{x^m}{m!}
\end{equation}
for all $\alpha, \beta, \gamma \in \RR$ such that $\gamma \neq 0, -1, -2, \ldots$,
and $(\gamma)_m = \gamma (\gamma+1) \cdots (\gamma+m-1)$, for $m \in \NN$
(cf., e.g. \cite[Chapter~XV]{AS} and \cite[Chapter~II]{BE}), and additionally set
\begin{equation}
\label{chi}
\chi_{k, l} = 1 + \frac{\gamma_{k} - \beta_l}{\alpha} \quad \text{and} \quad \varkappa_k = 1 + \frac{2}{\alpha} \,
\bigg( \gamma_{k} - \frac{1}{2} + \frac{r - \delta}{\sigma^2} \bigg)
\end{equation}
for every $k, l = 1, 2$, where $\beta_k$ and $\gamma_k$, for $k = 1, 2$, are given by (\ref{beta12}) above and (\ref{gamma12}) below.
Thus, $W_{1,k}(x, s)$ and $W_{2,k}(x, q)$, for $k = 1, 2$, in the expressions of
(\ref{hatH1}) and (\ref{hatH2}), under $j = 1$,
are (strictly) increasing and decreasing (convex) functions satisfying the properties
$W_{1,1}(0+, s) = W_{2,1}(0+, q) = + 0$, $W_{1,1}(\infty, s) = W_{2,1}(\infty, q) = \infty$
and $W_{1,2}(0+, s) = W_{2,2}(0+, q) = \infty$, $W_{1,2}(\infty, s) = W_{2,2}(\infty, q) = + 0$,
for each $s > 0$ and $q > 0$ fixed, respectively
(cf., e.g. \cite[Chapter~V, Section~50]{RW} for further details).

Finally, by applying the conditions of (\ref{CF}) and the right-hand side of (\ref{NR1})-(\ref{NR2})
to the functions in (\ref{hatU1}) and (\ref{hatU2}), we obtain that the equalities:
\begin{align}
\label{UCont1}
&C_{i,1,1}(s) \, W_{1,1}(a_{i,1}(s), s) + C_{i,1,2}(s) \, W_{1,2}(a_{i,1}(s), s) =
G_i(a_{i,1}(s), s) \\
\label{USmft1}
&C_{i,1,1}(s) \, \partial_x W_{1,1}(a_{i,1}(s), s) +
C_{i,1,2}(s) \, \partial_x W_{1,2}(a_{i,1}(s), s) = \partial_x G_i(a_{i,1}(s), s) \\
\label{NCont1}
&C_{i,1,1}(s) \, W_{1,1}(s, s) + C_{i,1,2}(s) \, W_{1,2}(s, s) = V_{i,0}(s, s)
\end{align}
should hold, for each $s > {\underline s}_{i,1}$ fixed, with ${\underline s}_{i,1}$
specified below, and
\begin{align}
\label{UCont2}
&D_{i,1,1}(q) \, W_{2,1}(b_{i,1}(q), q) + D_{i,1,2}(q) \, W_{2,2}(b_{i,1}(q), q)
= F_i(b_{i,1}(q), q) \\
\label{USmft2}
&D_{i,1,1}(q) \, \partial_x W_{2,1}(b_{i,1}(q), q) +
D_{i,1,2}(q) \, \partial_x W_{2,2}(b_{i,1}(q), q) = \partial_x F_i(b_{i,1}(q), q) \\
\label{NCont2}
&D_{i,1,1}(q) \, W_{2,1}(q, q) + D_{i,1,2}(q) \, W_{2,2}(q, q) = U_{i,0}(q, q)
\end{align}
should hold, for each $q < {\overline q}_{i,1}$ fixed, with ${\overline q}_{i,1}$
specified below, and every $i = 1, 2, 3$.
Then, solving the system of equations in (\ref{UCont1})-(\ref{USmft1})
and (\ref{UCont2})-(\ref{USmft2}), we obtain that the functions
\begin{equation}
\label{hatUpart1}
V_{i,1}(x, s; a_{i,1}(s)) = C_{i,1,1}(s; a_{i,1}(s)) \, W_{1,1}(x, s)
+ C_{i,1,2}(s; a_{i,1}(s)) \, W_{1,2}(x, s)
\end{equation}
for $a_{i,1}(s) < x < s$, with
\begin{align}
\label{hatC51}
&C_{i,1,l}(s; a_{i,1}(s)) \\
\notag
&= \frac{G_i(a_{i,1}(s), s) \partial_x W_{1,3-l}(a_{i,1}(s), s)
- \partial_x G_i(a_{i,1}(s), s) W_{1,3-l}(a_{i,1}(s), s)}
{W_{1,l}(a_{i,1}(s), s) \partial_x W_{1,3-l}(a_{i,1}(s), s)
- W_{1,3-l}(a_{i,1}(s), s) \partial_x W_{1,l}(a_{i,1}(s), s)}
\end{align}
for each $s > {\underline s}_{i,1}$ fixed, and every $l = 1, 2$, and
\begin{equation}
\label{hatUpart2}
U_{i,1}(x, q; b_{i,1}(q))
= D_{i,1,1}(q; b_{i,1}(q)) \, W_{2,1}(x, q) + D_{i,1,2}(q; b_{i,1}(q)) \, W_{2,2}(x, q)
\end{equation}
for $q < x < b_{i,1}(q)$, with
\begin{align}
\label{hatC52}
&D_{i,1,k}(q; b_{i,1}(q)) \\
\notag
&= \frac{F_i(b_{i,1}(q), q) \partial_x W_{2,3-l}(b_{i,1}(q), q)
- \partial_x F_i(b_{i,1}(q), q) W_{2,3-l}(b_{i,1}(q), q)}
{W_{2,l}(b_{i,1}(q), q) \partial_x W_{2,3-l}(b_{i,1}(q), q)
- W_{2,3-l}(b_{i,1}(q), q) \partial_x W_{2,l}(b_{i,1}(q), q)}
\end{align}
for each $q < {\overline q}_{i,1}$ fixed, and every $l = 1, 2$, for every $i = 1, 2, 3$.
Hence, putting the expressions from (\ref{hatC51})
into the system of equations in (\ref{NCont1}) and (\ref{NCont2}), we obtain that
the functions $a_{i,1}(s)$ and $b_{i,1}(q)$ in (\ref{hatUpart1}) and (\ref{hatUpart2})
satisfy the conditions of (\ref{NCont1}) and (\ref{NCont2}) when the equality:
\begin{align}
\label{hatUeq1}
&C_{i,1,1}(s; a_{i,1}(s)) \, W_{1,1}(s, s) + C_{i,1,2}(s; a_{i,1}(s)) \, W_{1,2}(s, s) = V_{i,0}(s, s)
\end{align}
holds with $V_{i,0}(s, s)$ given by (\ref{Vg32a}) above, for each $s > {\underline s}_{i,1}$ fixed,
and the equality:
\begin{align}
\label{hatUeq2}
&D_{i,1,1}(q; b_{i,1}(q)) \, W_{2,1}(q, q) + D_{i,2,1}(q; b_{i,1}(q)) \, W_{2,2}(q, q) = U_{i,0}(q, q)
\end{align}
holds with $U_{i,0}(q, q)$ given by (\ref{Vg32b}) above, for each $q < {\overline q}_{i,1}$ fixed, for every $i = 1, 2$, respectively.
We further consider the appropriate maximal and minimal solutions $a^*_{i,1}(s)$ and $b^*_{i,1}(q)$ of the
transcendental arithmetic equations from (\ref{hatUeq1}) and (\ref{hatUeq2}) such that the inequalities
in (\ref{VC})-(\ref{LVD2}) hold for the candidate value functions $V_{i,1}(x, s; a_{i,1}(s))$ and
$U_{i,1}(x, q; b_{i,1}(q))$ from (\ref{hatUpart1}) and (\ref{hatUpart2}), for every $i = 1, 2$, respectively.
It follows from the arguments of Proof of Theorem~\ref{lem21} that the latter inequalities take
the form $a^*_{i,1}(s) < g^*_i(s) \wedge {\overline a}_{i,1}(s) \wedge s$, for $s > {\underline s}_{i,1}$,
and $b^*_{i,1}(q) > h^*_i(q) \vee {\underline b}_{i,1}(q) \vee q$, for $q < {\overline q}_{i,1}$, with
${\underline s}_{i,1}$ and ${\overline q}_{i,1}$ such that the appropriate solutions of (\ref{hatUeq1})
and (\ref{hatUeq2}) exist, for every $i = 1, 2$, respectively.


    \section{\dot Main results and proofs}

     In this section, based on the expressions computed above, we formulate and prove the main results of the paper.
     Note that the upper and lower bounds ${\overline a}_{i,j}(s)$ and ${\underline b}_{i,j}(q)$
     for the optimal exercise boundaries $a^*_{i,j}(s)$ and $b^*_{i,j}(q)$, for every $i = 1, 2$
     and $j = 0, 1$, respectively, are specified in Theorem~\ref{lem21} above.

\begin{theorem} \label{thm41}
Let the processes $(X, S, \Xi^1 \equiv I(\theta \le \cdot))$ and $(X, Q, \Xi^2 \equiv I(\eta \le \cdot))$
be given by (\ref{X4a})-(\ref{X4b}) and (\ref{dX4a})-(\ref{dX4b}) with (\ref{SQ4})-(\ref{theta})
and some $r > 0$, $\delta > 0$, and $\sigma > 0$ fixed, while the inequality
$\delta' \equiv \delta + \alpha \sigma \equiv 2 r - \delta - \sigma^2 > 0$ be satisfied.
Then, the value functions of the perpetual American standard and lookback
put and call options from (\ref{VU5a1}) and (\ref{VU5c1}) and the payoff
functions $G_i(x, s)$ and $F_i(x, q)$ given after (\ref{VU5a1}) above
admit the representations:
\begin{equation}
\label{4V*4'}
      V^*_{i,j}(x, s) =
      \begin{cases}
      V_{i,0}(x, s; a^*_{i,0}(s)), & \text{if} \;\; a^*_{i,0}(s) < x \le s
      \;\; \text{and} \;\; s > {\underline s}_{i,j}, \;\; j = 0 \\
      V_{i,1}(x, s; a^*_{i,1}(s)), & \text{if} \;\; a^*_{i,1}(s) < x \le s
      \;\; \text{and} \;\; s > {\underline s}_{i,j}, \;\; j = 1 \\
      G_i(x, s), & \text{if either} \;\; 0 < x \le a^*_{i,j}(s) \;\; \text{and}
      \;\; s > {\underline s}_{i,j} \;\; \text{or} \;\; x \le s \le {\underline s}_{i,j} \!\!\!\!\!
      \end{cases}
\end{equation}
whenever $\alpha \equiv 2 (r - \delta)/{\sigma^2} - 1 < 0$, and
\begin{equation}
\label{rho3c}
      U^*_{i,j}(x, q)=
      \begin{cases}
      U_{i,0}(x, q; b^*_{i,0}(q)), & \text{if} \;\; q \le x < b^*_{i,0}(q)
      \;\; \text{and} \;\; q < {\overline q}_{i,j}, \;\; j = 0 \\
      U_{i,1}(x, q; b^*_{i,1}(q)), & \text{if} \;\; q \le x < b^*_{i,1}(q)
      \;\; \text{and} \;\; q < {\overline q}_{i,j}, \;\; j = 1 \\
      F_i(x, q), & \text{if either} \;\; x \ge b^*_{i,j}(q) \;\; \text{and} \;\;
      q < {\overline q}_{i,j} \;\; \text{and} \;\; x \ge q \ge {\overline q}_{i,j} \!\!\!\!\!
      \end{cases}
\end{equation}
whenever $\alpha > 0$, while the optimal exercise times
have the form of (\ref{tau*}) above, where the candidate value functions and exercise boundaries
are specified as follows:


(i) the functions $V_{i,0}(x, s; a_{i,0}(s))$ and $V_{i,1}(x, s; a_{i,1}(s))$
are given by (\ref{Vg32a}) and (\ref{hatUpart1}) with (\ref{CDg32a}), whenever $\alpha < 0$,
while the optimal exercise boundaries $a^*_{i,0}(s)$ ($a^*_{1,0}$ is given by (\ref{ab01}))
and $a^*_{i,1}(s)$ provide the maximal solution to either the first-order nonlinear ordinary
differential equation in (\ref{dg32a}) or the transcendental arithmetic equation in (\ref{hatUeq1})
satisfying the inequalities $a^*_{i,j}(s) < g^*_i(s) \wedge {\overline a}_{i,j}(s) \wedge s$,
for all $s > {\underline s}_{i,j}$ as well as every $i = 2, 3$ and $j = 0, 1$, respectively;


\newpage
(ii) the functions $U_{i,0}(x, q; b_{i,0}(q))$ and $U_{i,1}(x, q; b_{i,1}(q))$ are given
by (\ref{Vg32b}) and (\ref{hatUpart2}) with (\ref{CDg32b}), whenever $\alpha > 0$,
while the optimal exercise boundaries $b^*_{i,0}(q)$ ($b^*_{1,0}$ is given by (\ref{ab01}))
and $b^*_{i,1}(q)$ provide the minimal solution to either the first-order nonlinear
ordinary differential equation in (\ref{dg32b}) or the transcendental arithmetic equation
in (\ref{hatUeq2}) satisfying the inequalities
$b^*_{i,j}(q) > q \vee {\underline b}_{i,j}(q) \vee h^*_i(q)$, for all
$q < {\overline q}_{i,j}$ as well as every $i = 2, 3$ and $j = 0, 1$, respectively.
\end{theorem}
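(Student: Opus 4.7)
The plan is to verify that the piecewise-defined candidate functions in \eqref{4V*4'} and \eqref{rho3c} solve the three-dimensional optimal stopping problems \eqref{VU5c1} by a standard two-sided comparison: producing an upper bound via a supermartingale argument and a matching lower bound by evaluating at the candidate stopping times in \eqref{tau*}. I will focus on the case $\alpha<0$ and the value $V^*_{i,j}$; the treatment of $U^*_{i,j}$ under $\alpha>0$ is entirely symmetric. First I would assemble the candidate function $V^*_{i,j}(x,s)$ as in \eqref{4V*4'} out of the pieces $V_{i,0}(x,s;a^*_{i,0}(s))$ and $V_{i,1}(x,s;a^*_{i,1}(s))$ constructed in Section~3, together with the payoff $G_i(x,s)$ on the stopping sets. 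From the construction one already has by design that $V^*_{i,j}$ is $C^{2,1}$ inside each continuation region $C^*_{i,1,j}$, satisfies $(\mathcal{L}V^*_{i,j}-rV^*_{i,j})(x,s)=0$ there, matches the payoff across the boundary $x=a^*_{i,j}(s)$ by \eqref{CF}, has continuous first $x$-derivative there by the smooth-fit condition \eqref{SF}, and satisfies the normal-reflection condition \eqref{NR1} at $x=s$ for $j=0$ as well as the matching $V_{i,1}(s,s)=V_{i,0}(s,s)$ prescribed in the right-hand side of \eqref{NR1}.

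Next I would apply the change-of-variable formula with local time on surfaces from Peskir \cite{Pe1a} to the process $e^{-rt}V^*_{i,\Xi^1_t}(X_t,S_t)$ under $\mathbb{P}_{x,s,j}$. This formula decomposes the process into an initial value, a Lebesgue integral of $(\mathcal{L}V^*_{i,j}-rV^*_{i,j})(X_u,S_u)\,du$ on the open set $\{X_u\neq S_u\}\cap\{X_u\neq a^*_{i,\Xi^1_u}(S_u)\}$, a $d\tilde B^1$-stochastic integral, a term of the form $\partial_s V^*_{i,0}(X_u,S_u)\,dS_u$ accumulated on $\{X_u=S_u,\,\Xi^1_u=0\}$, a local-time term on the surface $\{x=a^*_{i,j}(s)\}$, and a jump contribution at the random time $\theta$ corresponding to $V^*_{i,1}(X_\theta,S_\theta)-V^*_{i,0}(X_\theta,S_\theta)$. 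The smooth-fit condition \eqref{SF} makes the local time on $x=a^*_{i,j}(s)$ vanish; the normal-reflection condition in \eqref{NR1} forces $\partial_s V^*_{i,0}(s,s)=0$ when $\Xi^1=0$, so that the $dS_u$ term disappears before $\theta$; after $\theta$ the component $S$ no longer increases, so the $dS_u$ term does not appear either. Finally, since by construction $\theta=\sup\{t\ge 0:X_t=S_t\}$, one has $X_\theta=S_\theta$, and the matching equation \eqref{NR1} (right half) ensures that the jump at $\theta$ vanishes, which is precisely what transcendental equation \eqref{hatUeq1} encodes.

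This leaves a representation $e^{-rt}V^*_{i,\Xi^1_t}(X_t,S_t)=V^*_{i,j}(x,s)+M_t+A_t$, where $M$ is a local martingale and $A_t=\int_0^t e^{-ru}(\mathcal{L}V^*_{i,\Xi^1_u}-rV^*_{i,\Xi^1_u})(X_u,S_u)\,du$. By \eqref{LV1} and \eqref{LVD1}, together with the sign of $H_{i,k,j}$ on the stopping region established in Parts~(ii)--(iii) of the proof of Theorem~\ref{lem21}, one has $A_t\le 0$, so $e^{-rt}V^*_{i,\Xi^1_t}(X_t,S_t)+(-A_t)$ is a nonnegative local martingale dominating $e^{-rt}G_i(X_t,S_t)$. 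A standard localisation through stopping times $(\tau_n)$ that bound $X$ away from $0$ and $\infty$, followed by Fatou's lemma, upgrades $M$ to a supermartingale. Optional sampling then gives $V^*_{i,j}(x,s)\ge\mathbb{E}_{x,s,j}[e^{-r\tau}G_i(X_\tau,S_\tau)]$ for all $\mathbb{G}^1$-stopping times $\tau$, hence $V^*_{i,j}(x,s)\ge V_i$. Evaluating the representation at $\tau^*_i$ from \eqref{tau*} turns the inequalities into equalities, since $A_{\tau^*_i}=0$ up to $\tau^*_i$ and $V^*_{i,\Xi^1_{\tau^*_i}}(X_{\tau^*_i},S_{\tau^*_i})=G_i(X_{\tau^*_i},S_{\tau^*_i})$ by \eqref{CF}. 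Finiteness of $\mathbb{E}_{x,s,j}[e^{-r\tau^*_i}]$ and integrability of the payoff on $\{\tau^*_i<\infty\}$ follow from the assumption $\delta'>0$ and standard estimates on geometric Brownian motion; on $\{\tau^*_i=\infty\}$ one uses $e^{-rt}G_i(X_t,S_t)\to 0$ in $L^1$.

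The hard part will be the careful justification of the Peskir change-of-variable formula in the present setting, because the underlying process has two sources of non-smoothness: the diagonal surfaces $\{x=s\}$ (where $S$ is reflected and, moreover, the candidate switches between $j=0$ and $j=1$ at $\theta$) and the moving optimal boundary surfaces $\{x=a^*_{i,j}(s)\}$. One must verify the technical hypotheses of Peskir's formula separately on $\{\Xi^1_u=0\}$ and $\{\Xi^1_u=1\}$, check that the jump of $V^*_{i,\Xi^1_\cdot}(X_\cdot,S_\cdot)$ at $\theta$ is exactly the one absorbed by the continuity equation in \eqref{NR1}, and handle the interaction between the diagonal local time and the jump at $\theta$, which is the co-dimension-two set where $X=S$ and $\Xi^1$ switches. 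The remaining admissibility question, namely that the maximal/minimal solutions $a^*_{i,j}$ and $b^*_{i,j}$ of \eqref{dg32a}--\eqref{dg32b} and \eqref{hatUeq1}--\eqref{hatUeq2} stay strictly below, respectively above, the bounds ${\overline a}_{i,j}(s)\wedge g^*_i(s)\wedge s$ and $q\vee{\underline b}_{i,j}(q)\vee h^*_i(q)$, is settled by the bad-good solutions analysis recalled in Subsection~3.3 combined with the comparison result of Theorem~\ref{lem21}.
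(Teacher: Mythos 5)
Your proposal follows essentially the same route as the paper's verification: Peskir's change-of-variable formula on surfaces applied to $e^{-rt}V_{i,\Xi^1_t}(X_t,S_t)$, with the smooth-fit condition killing the local time on $\{x=a_{i,j}(s)\}$, the modified normal-reflection/continuity conditions in (\ref{NR1}) absorbing the $dS$-term together with the compensator of the jump of $\Xi^1$ at $\theta$, a supermartingale-plus-Fatou argument for the upper bound, and evaluation at the boundary hitting time for the lower bound. The only presentational differences are that the paper first runs this argument for an \emph{arbitrary} admissible boundary $a_{i,j}$ and then identifies the value function as $\inf_{a_i}V^{a_i}_{i,j}=V^{a^*_i}_{i,j}$ (the explicit implementation of the maximality principle), and approximates $\tau^*_i$ by the hitting times $\tau_{i,m}$ of sub-maximal boundaries before passing to the limit via the Shepp--Shiryaev uniform-integrability estimate (\ref{ESS4}) and dominated convergence, whereas you work directly with $a^*_{i,j}$ and $\tau^*_i$.
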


Since both parts of the assertion stated above are proved using similar arguments, we only
give a proof for the case of the left-hand three-dimensional optimal stopping problem of
(\ref{VU5c1}) related to the perpetual American standard and lookback put options.
Observe that we can put $s = x$ and $q = x$ as well as $j = 0$ to obtain the values
of the original perpetual American standard and lookback put and call option pricing
problems of (\ref{VU5a1}) from the values of the optimal stopping problems of (\ref{VU5c1}).

\begin{proof}
        In order to verify the assertion stated
        above, it remains for us to show that the function defined in (\ref{4V*4'}) coincides with
        the value function in (\ref{VU5c1}) and that the stopping time $\tau^*_i$ in (\ref{tau*})
        is optimal with the boundaries $(a^*_{i,0}(s), a^*_{i,1}(s))$, for every $i = 1, 2, 3$, specified above.
        For this purpose, let $a_{i,0}(s)$ be any solution to the ordinary differential equation
        in (\ref{dg32a}) and $a_{i,1}(s)$ be any solution to the arithmetic transcendental equation in
        (\ref{hatUeq1}) satisfying the inequalities $a_{i,j}(s) < g^*_i \wedge {\overline a}_{i,j}(s) \wedge s$,
        where ${\overline a}_{i,j}(s)$ and $g^*_i(s)$ are specified in Theorem~\ref{lem21} above and
        Proposition~\ref{CorA1} below, respectively, for $s > {\underline s}_{i,j}$ and every
        $i = 1, 2, 3$ and $j = 0, 1$.
        Let us also denote by $V^{a_i}_{i,j}(x, s)$  the right-hand side of
        the expression in (\ref{4V*4'}) associated with $a_{i,j}(s)$, for every $i = 1, 2, 3$ and $j = 0, 1$.
  Then, it is shown by means of straightforward calculations from the previous section that the function
  $V^{a_i}_{i,j}(x, s)$ solves the system of (\ref{LV1}) with (\ref{VD})-(\ref{LVD1})
  and satisfies the conditions of (\ref{CF})-(\ref{NR1}). Recall that the function
  $V^{a_i}_{i,j}(x, s)$ is of the class $C^{2,1}$ on the closure ${\bar C}_{i,1,j}$ of $C_{i,1,j}$ and is equal
  to $G_i(x, s)$ on $D_{i,1,j}$, which are defined as ${\bar C}^*_{i,1,j}$, $C^*_{i,1,j}$ and
  $D^*_{i,1,j}$ in (\ref{C73}) and (\ref{D73}) with $a_{i,j}(s)$ instead of
  $a^*_{i,j}(s)$, for every $i = 1, 2, 3$ and $j = 0, 1$, respectively.

        Hence, taking into account the assumption
        that the boundary $a_{i,j}(s)$ is continuously differentiable,
        for all $s > {\underline s}_{i,j}$, by applying the change-of-variable
        formula from \cite[Theorem~3.1]{Pe1a} to the process
        $(e^{- r t} V^{a_i}_{i,\Xi^1_t}(X_t, S_t))_{t \ge 0}$ (cf. also \cite[Chapter~II, Section~3.5]{PSbook}
        for a summary of the related results and further references), we obtain the expression:
        \begin{align}
        \label{rho4c}
        &e^{- r t} \, V^{a_i}_{i,\Xi^1_t}(X_t, S_t) = V^{a_i}_{i,j}(x, s) + M^{i,1}_t \\
        \notag
        &+ \int_0^t e^{- r u} \, \big( \LL V^{a_i}_{i,\Xi^1_u} - r V^{a_i}_{i,\Xi^1_u} \big)
        (X_{u}, S_{u}) \, I(X_{u} \neq a_{i,\Xi^1_u}(S_u), X_u \neq S_u) \, du \\
        \notag
        &+ \int_0^t e^{- r u} \, \bigg( \partial_s V^{a_i}_{i,0}(X_{u}, S_{u})
        - \big( V^{a_i}_{i,1}(X_{u}, S_{u}) - V^{a_i}_{i,0}(X_{u}, S_{u}) \big) \,
        \frac{\alpha}{S_u} \bigg) \, I \big( X_u = S_u, \Xi^1_u = 0 \big) \, dS_u
\end{align}
for all $t \ge 0$, for every $i = 1, 2, 3$.
Here, the process $M^{i,1} = (M^{i,1}_t)_{t \ge 0}$ defined by:
\begin{align}
\label{N5}
M^{i,1}_t &= \int_0^t e^{- r u} \, \partial_x V^{a_i}_{i,\Xi^1_u}(X_u, S_u) \,
I(X_{u} \neq S_{u}) \, \sigma \, X_u \, d{\tilde B}^1_u \\
\notag
&\phantom{=\;\:}+ \int_0^t e^{- r u} \,
\big( V^{a_i}_{i,1}(X_{u}, S_{u}) - V^{a_i}_{i,0}(X_{u}, S_{u}) \big) \,
I \big( X_u = S_u, \Xi^1_u = 0 \big) \, dN^{i,1}_u
\end{align}
is a local martingale under $(\GG^1, \PP)$, because of the continuous stochastic integral component
with respect to the standard Brownian motion ${\tilde B}^1 = ({\tilde B}^1_t)_{t \ge 0}$ is a local
martingale, where the process $N^{i,1} = (N^{i,1}_t)_{t \ge 0}$ is given by:
        \begin{align}
        \label{N5a}
        &N^{i,1}_t = \Xi^1_t - \int_0^t I \big( \Xi^1_u = 0 \big) \, \frac{\alpha}{S_u} \, {dS_u}
        \equiv \Xi^1_t - \int_0^{t \wedge \theta} \frac{\alpha}{S_u} \, {dS_u}
        \end{align}
        is a discontinuous uniformly integrable martingale under $(\GG^1, \PP)$,
        for each $0 < x \le s$ and every $i = 1, 2, 3$ and $j = 0, 1$ fixed.

        Note that, since the time spent by the process $(X, S)$ at the boundary surface
        $\partial C_{i,1,j} = \{ (x, s) \in E_{1} \, | \, x = a_{i,j}(s) \}$,
        for every $i = 1, 2, 3$ and $j = 0, 1$,
        as well as at the diagonal $d_{1} = \{ (x, s) \in E_{1} \, | \, x = s \}$
        is of the Lebesgue measure zero (cf., e.g. \cite[Chapter~II, Section~1]{BS}),
        the indicators in the second line of the formula in (\ref{rho4c}) as well as
        in the expression of (\ref{N5}) can be ignored.
        Moreover, since the component $S$ increases only when the process $(X, S)$ is located on the
        diagonal $d_{1}$, the indicator in the third line of (\ref{rho4c}) can also be set equal to one.
        Observe that the integral in the third line of (\ref{rho4c}) will actually be compensated
        accordingly, due to the fact that the candidate value function $V^{a_i}_{i,0}(x, s)$ satisfies
        the normal-reflection condition in the (\ref{NR1}) at the diagonal $d_{1}$,
        for every $i = 1, 2, 3$.

        For every $i = 1, 2, 3$ and $j = 0, 1$, it follows from straightforward calculations and the arguments of the previous section
        that the function $V^{a_i}_{i,j}(x, s)$ satisfies the second-order ordinary differential
        equation in (\ref{LV1}), which together with  (\ref{CF})-(\ref{SF})
        and (\ref{VD}) as well as the fact that the inequality in (\ref{LVD1}) holds imply that the inequality
        $(\LL V^{a_i}_{i,j} - r V^{a_i}_{i,j})(x, s) \le 0$ is satisfied with $H_{i,1,j}(x, s)$ given in (\ref{H5b2}),
        for all $0 < x < s$ such that $x \neq a_{i,j}(s)$, as well as every $i = 1, 2, 3$ and $j = 0, 1$.
        Moreover, we observe directly from the expressions in (\ref{Vg32a}) and (\ref{H'4a}) and (\ref{H''4a})
        with (\ref{CDg32a}) that the function $V^{a_i}_{i,j}(x, s)$ is convex and decreases to zero,
        because its first-order partial derivative $\partial_x V^{a_i}_{i,j}(x, s)$ is negative and increases to zero,
        while its second-order partial derivative $\partial_{xx} V^{a_i}_{i,j}(x, s)$ is positive, on the interval
        $a_{i,j}(s) < x \le s$, under $\alpha < 0$, for each $s > {\underline s}_{i,j}$.
        Thus, we may conclude that the inequality in (\ref{VC}) holds, which together with the
        conditions in (\ref{CF})-(\ref{SF}) and (\ref{VD}) imply that the inequality $V^{a_i}_{i,j}(x, s) \ge 0$
        is satisfied, for all $0 < x \le s$ as well as every $i = 1, 2, 3$ and $j = 0, 1$.
        Let $(\kappa_{i,n})_{n \in \NN}$ be the localising sequence
        of stopping times for the process $M^{i,1}$ from (\ref{N5}) such that
        $\kappa_{i,n} = \inf\{t \ge 0 \, | \, |M^{i,1}_t| \ge n \}$, for each
        $n \in \NN$ and every $i = 1, 2, 3$.
        It therefore follows from the expression in (\ref{rho4c}) that the inequalities:
        \begin{align}
        \label{rho4e}
        e^{- r (\tau \wedge \kappa_{i,n})} \,
        G_i(X_{\tau \wedge \kappa_{i,n}}, S_{\tau \wedge \kappa_{i,n}})
        &\le e^{- r (\tau \wedge \kappa_{i,n})} \, V^{a_i}_{i,\Xi^1_{\tau \wedge \kappa_{i,n}}}
        (X_{\tau \wedge \kappa_{i,n}}, S_{\tau \wedge \kappa_{i,n}}) \\
        \notag
        &\le V^{a_i}_{i,j}(x, s) + M^{i,1}_{\tau \wedge \kappa_{i,n}}
        \end{align}
        hold, for any stopping time $\tau$ of the process $(X, S, \Xi^1)$
        and each $n \in \NN$ as well as every $i = 1, 2, 3$ and $j = 0, 1$.
        Then, taking the expectation with respect to $\PP_{x, s, j}$
        in (\ref{rho4e}), by means of Doob's optional sampling theorem, we get:
        \begin{align}
        \label{rho4e2}
        \EE_{x, s, j} \big[ e^{- r (\tau \wedge \kappa_{i,n})} \,
        G_i(X_{\tau \wedge \kappa_{i,n}}, S_{\tau \wedge \kappa_{i,n}}) \big]
        &\le \EE_{x, s, j} \big[ e^{- r (\tau \wedge \kappa_{i,n})} \,
        V^{a_i}_{i,\Xi^1_{\tau \wedge \kappa_{i,n}}}(X_{\tau \wedge \kappa_{i,n}}, S_{\tau \wedge \nu_{i,k}}) \big]
        \\
        \notag
        &\le V^{a_i}_{i,j}(x, s) + \EE_{x, s, j} \big[ M^{i,1}_{\tau \wedge \kappa_{i,n}} \big] = V^{a_i}_{i,j}(x, s)
        \end{align}
        for all $0 < x \le s$ and each $n \in \NN$ as well as every $i = 1, 2, 3$ and $j = 0, 1$.
        Hence, letting $n$ go to infinity and using Fatou's lemma, we obtain from the expressions
        in (\ref{rho4e2}) that the inequalities:
        \begin{align}
        \label{rho4e3}
        \EE_{x, s, j} \big[ e^{- r \tau} \, G_i(X_{\tau}, S_{\tau}) \big]
        &\le \EE_{x, s, j} \big[ e^{- r \tau} \, V^{a_i}_{i,\Xi^1_{\tau}}(X_{\tau}, S_{\tau}) \big]
        \le V^{a_i}_{i,j}(x, s)
        \end{align}
        are satisfied, for any stopping time $\tau$ and all $0 < x \le s$
        such that $s > {\underline s}_{i,j}$ as well as every $i = 1, 2, 3$ and $j = 0, 1$.
        Thus, taking the supremum over all stopping times $\tau$ and then the infimum
        over all candidate boundaries $a_{i,j}$ in the expressions of (\ref{rho4e3}),
        we may therefore conclude that the inequalities:
        \begin{align}
        \label{rho4e4}
        &\sup_{\tau} \EE_{x, s, j} \big[ e^{- r \tau} \, G_i(X_{\tau}, S_{\tau}) \big]
        \le \inf_{a_{i}} V^{a_i}_{i,j}(x, s) = V^{a^*_i}_{i,j}(x, s)
        \end{align}
        hold, for all $0 < x \le s$, where $a^*_{i,j}(s)$ is the maximal solution to
        either the ordinary differential equation in (\ref{dg32a}) or the arithmetic
        equation in (\ref{hatUeq1}) as well as satisfying
        the inequality $a^*_{i,j}(s) < g^*_i(s) \wedge {\overline a}_{i,j}(s) \wedge s$,
        for all $s > {\underline s}_{i,j}$ as well as every $i = 1, 2, 3$ and $j = 0, 1$.

        By using the fact that the function $V^{a_i}_{i,j}(x, s)$ is (strictly)
        decreasing in the value $a_{i,j}(s)$, for each $s > {\underline s}_{i,j}$ fixed,
        we see that the infimum in (\ref{rho4e4})
        is attained over any sequence of solutions $(a_{i,j,m}(s))_{m \in \NN}$
        to (\ref{dg32a}) satisfying the inequality
        $a_{i,j,m}(s) < g^*_i(s) \wedge {\overline a}_{i,j}(s) \wedge s$, for all $s > {\underline s}_{i,j}$,
        and each $m \in \NN$, such that
        $a_{i,j,m}(s) \uparrow a^*_{i,j}(s)$ as $m \to \infty$,
        for each $s > {\underline s}_{i,j}$ fixed as well as every $i = 1, 2, 3$ and $j = 0, 1$.
It follows from the (local) uniqueness of the solutions to the first-order (nonlinear) ordinary
differential equation in (\ref{dg32a}) that no distinct solutions intersect, so that the sequence
$(a_{i,j,m}(s))_{m \in \NN}$ is decreasing and the limit
$a^*_{i,j}(s) = \lim_{m \to \infty} a_{i,j,m}(s)$ exists, for each $s > {\underline s}_{i,j}$ fixed.
Since the inequalities in (\ref{rho4e3}) hold for $a^*_{i,j}(s)$ too, we see that
the expression in (\ref{rho4e4}) holds, for $a^*_{i,j}(s)$ and all $0 < x \le s$
as well as every $i = 1, 2, 3$ and $j = 0, 1$.
We also note from the inequality in (\ref{rho4e2}) that the function $V^{a_i}_{i,j}(x, s)$
is superharmonic for the Markov process $(X, S)$, for all $0 < x \le s$
as well as every $i = 1, 2, 3$ and $j = 0, 1$.
Hence, taking into account the facts that $V^{a_i}_{i,j}(x, s)$ is decreasing
in $a_{i,j}(s) < g^*_i(s) \wedge {\overline a}_{i,j}(s) \wedge s$, for all
$s > {\underline s}_{i,j}$ as well as every $i = 1, 2, 3$ and $j = 0, 1$, while the inequality
$V^{a_i}_{i,j}(x, s) \ge 0$ holds, for all $0 < x \le s$, we observe that the selection
of the maximal solution $a^*_{i,j}(s)$, which stays strictly below the diagonal
$d_{1} = \{ (x, s) \in E_{1} \, | \, x = s \}$ and the curve
$\{ (x, s) \in E_{1} \, | \, x = {\overline a}_{i,j}(s) \}$, for every $i = 1, 2, 3$
and $j = 0, 1$, is equivalent to the implementation of the superharmonic
characterisation of the value function as the smallest superharmonic function dominating
the payoff function (cf. \cite{Pmax} or \cite[Chapter~I and Chapter~V, Section~17]{PSbook}).

In order to prove the fact that the boundary $a^*_{i,j}(s)$ is optimal, we consider the sequence
of stopping times $\tau_{i,m}$, for $m \in \NN$, defined as in the left-hand part of (\ref{tau*}) with
$a_{i,j,m}(s)$ instead of $a^*_{i,j}(s)$, where $a_{i,j,m}(s)$ is a solution to the first-order
ordinary differential equation in (\ref{dg32a}) and such that $a_{i,j,m}(s) \uparrow a^*_{i,j}(s)$
as $m \to \infty$, for each $s > {\underline s}_{i,j}$ as well as every $i = 1, 2, 3$ and $j = 0, 1$.
Then, by virtue of the fact that the function $V^{a_{i,m}}_{i,j}(x, s)$ from the left-hand side
of the expression in (\ref{rho3c}) associated with the boundary $a_{i,j,m}(s)$
satisfies equation (\ref{LV1}) and  (\ref{CF}),
and taking into account the structure of $\tau^*_i$ in (\ref{tau*}),
it follows from the expression which is equivalent to the one
in (\ref{rho4c}) that the equalities:
        \begin{align}
        \label{rho4g}
        e^{- r (\tau_{i,m} \wedge \kappa_{i,n})} \, G_i(X_{\tau_{i,m} \wedge \kappa_{i,n}},
        S_{\tau_{i,m} \wedge \kappa_{i,n}})
        &= e^{- r (\tau_{i,m} \wedge \kappa_{i,n})} \,
        V^{a_{i,m}}_{i,\Xi^1_{\tau_{i,m} \wedge \kappa_{i,n}}}(X_{\tau_{i,m} \wedge \kappa_{i,n}},
        S_{\tau_{i,m} \wedge \kappa_{i,n}}) \\
        \notag
        &= V^{a_{i,m}}_{i,j}(x, s) + M^{i,1}_{\tau_{i,m} \wedge \kappa_{i,n}}
        \end{align}
        hold, for all $0 < x \le s$ such that $s > {\underline s}_{i,j}$
        and each $n, m \in \NN$ as well as every $i = 1, 2, 3$ and $j = 0, 1$.
        Observe that, by virtue of the arguments from \cite[pages~635-636]{SS1}, the property:
        \begin{align}
        \label{ESS4}
        &\EE_{x, s, j} \Big[ \sup_{t \ge 0} e^{- r (\tau^*_i \wedge t)} \,
        G_i(X_{\tau^*_i \wedge t}, S_{\tau^*_i \wedge t}) \Big] < \infty
        \end{align}
        holds, for all $0 < x \le s$ as well as every $i = 1, 2, 3$ and $j = 0, 1$.
        Hence, letting $m$ and $n$ go to infinity and using the condition of (\ref{CF})
        as well as the property $\tau_{i,m} \uparrow \tau^*_i$ ($\PP_{x, s, j}$-a.s.) as $m \to \infty$,
        we can apply the Lebesgue dominated convergence theorem to the appropriate (diagonal)
        subsequence in the expression of (\ref{rho4g}) to obtain the equality:
        \begin{align}
        \label{rho4i}
        &\EE_{x, s, j} \big[ e^{- r \tau^*_i} \, G_i(X_{\tau^*_i}, S_{\tau^*_i}) \big] = V^{a^*_i}_{i,j}(x, s)
        \end{align}
        for all $0 < x \le s$ such that $s > {\underline s}_{i,j}$ as well as every $i = 1, 2, 3$ and $j = 0, 1$,
        which together with the inequalities in (\ref{rho4e4}) directly implies the desired assertion. $\square$
\end{proof}

\section{\dot Appendix}

In this section, we refer the closed-form solutions of the perpetual American standard and
lookback option problems in (\ref{VU7a1}) which are known from \cite[Chapter~VIII, Section~2a]{FM},
\cite{SS1}, \cite{Jesper}, \cite{GuoShepp}, \cite{GapSPL20} and \cite[Section~5]{Gapmax24} for completeness.
Although, by means of the change-of-measure arguments from \cite{SS2} and \cite{Grus},
the problems of (\ref{VU7a1}) under $i = 2$ can be reduced to the appropriate optimal
stopping problems for one-dimensional Markov processes $S/X = (S_t/X_t)_{t \ge 0}$
and $Q/X = (Q_t/X_t)_{t \ge 0}$, we present their solutions as of the
two-dimensional optimal stopping problems.


\subsection{The perpetual American standard and lookback put and call options.}
For the original perpetual American option pricing problems in (\ref{VU5c1}),
let us now recall explicit expressions for the value functions and optimal
exercise boundaries of the auxiliary option pricing problems with the
payoff functions $G_i(x, s)$ and $F_i(x, q)$ given after (\ref{VU5a1}) above
as solutions to the optimal stopping problems:
\begin{equation}
\label{VU7a1}
{\bar V}_{i}(x, s) = \sup_{{\tau'} \in \FF} \EE_{x, s}
\big[ e^{- r \tau'} \, G_i(X_{\tau'}, S_{\tau'}) \big]
\quad \text{and} \quad
{\bar U}_{i}(x, q) = \sup_{{\zeta'} \in \FF} \EE_{x, q}
\big[ e^{- r \zeta'} \, F_i(X_{\zeta'}, Q_{\zeta'}) \big]
\end{equation}
where the suprema are taken over all stopping times ${\tau'}$ and ${\zeta'}$
with respect to the natural filtration $\FF$ of the process $X$, for every $i = 1, 2, 3$.
Here, $\EE_{x, s}$ and $\EE_{x, q}$
denote the expectations with respect to the probability measures $\PP_{x, s}$ and $\PP_{x, q}$
under which the two-dimensional Markov processes $(X, S)$ and $(X, Q)$ given by
(\ref{X4a})-(\ref{dX4a}) and (\ref{X4b})-(\ref{dX4b}) with (\ref{SQ4})
start at $(x, s)$ such that $0 < x \le s$ and $(x, q)$ such that $0 < q \le x$, respectively.
It can be shown by means of the same arguments as in Theorem~\ref{lem21} above that
the optimal exercise times have the form:
       \begin{equation}
       \label{tau7*}
       {{\bar \tau}_i'} = \inf \big\{ t \ge 0 \; \big| \; X_t \le g^*_i(S_t) \big\}
       \quad \text{and} \quad
       {{\bar \zeta}_i'} = \inf \big\{ t \ge 0 \; \big| \; X_t \ge h^*_i(Q_t) \big\}
       \end{equation}
       for some boundaries $0 < g^*_1 < L_1$ and $h^*_1 > K_1$ as well as
       $0 < g^*_i(s) < s$ and $h^*_i(q) > q$, for $i = 2, 3$, to be determined.
It follows from the results of general theory mentioned above that the continuation
regions $C_{i,k}'$ and $D_{i,k}'$, for $k = 1, 2$, for the optimal stopping problems
of (\ref{VU7a1}) have the form:
\begin{equation}
\label{C75}
C_{i,1}' = \big\{ (x, s) \in E_1 \; \big| \; {\bar V}_{i,1}(x, s) > G_i(x, s) \big\},
\;\;
C_{i,2}' = \big\{ (x, q) \in E_2 \; \big| \; {\bar U}_{i,1}(x, q) > F_i(x, q) \big\}
\end{equation}
and
\begin{equation}
\label{D75}
D_{i,1}' = \big\{ (x, s) \in E_1 \; \big| \; {\bar V}_{i,1}(x, s) = G_i(x, s) \big\},
\;\;
D_{i,2}' = \big\{ (x, q) \in E_2 \; \big| \; {\bar U}_{i,1}(x, q) = F_i(x, q) \big\}
\end{equation}
for every $i = 1, 2, 3$, respectively.



\subsection{The candidate value functions and exercise boundaries.}
It is well known since \cite{McKean} (cf., e.g. \cite[Chapter~VIII, Section~2a]{FM})
that the value functions ${\overline V}_1(x, s) \equiv {\overline V}_{1}(x)$ and
${\overline U}_{1}(x, q) \equiv {\overline U}_{1}(x)$ in (\ref{VU7a1}) admit the representations:
\begin{equation}
\label{Upc1}
{\overline V}_{1}(x; g_{1}) =
- \frac{g_{1}}{\gamma_2} \, \Big( \frac{x}{g_{1}} \Big)^{\gamma_2}
\quad \text{and} \quad
{\overline U}_{1}(x; h_{1}) =
\frac{h_{1}}{\gamma_1} \, \Big( \frac{x}{h_{1}} \Big)^{\gamma_1}
\end{equation}
while the optimal exercise boundaries take the form:
       \begin{equation}
       \label{gh1}
       g^*_{1} = \frac{\gamma_2 L_1}{\gamma_2 - 1}
       \quad \text{and} \quad
       h^*_{1} = \frac{\gamma_1 K_1}{\gamma_1 - 1}
       \end{equation}
       where $\gamma_l$, for $l = 1, 2$, are given by:
       \begin{equation}
       \label{gamma12}
       \gamma_l = \frac{1}{2}-\frac{r-\delta}{\sigma^2} - (-1)^l
       \sqrt{\bigg( \frac{1}{2} - \frac{r-\delta}{\sigma^2} \bigg)^2
       + \frac{2r}{\sigma^2}}
       \end{equation}
       so that $\gamma_2 < 0 < 1 < \gamma_1$ holds.

       It follows from the results of \cite{Jesper} and \cite{GuoShepp}
       that the candidate value functions in the left-hand problem,
       for $i = 2, 3$, are given by:
       \begin{align}
       \label{Vi71}
       {\overline V}_i(x, s; g_i(s)) =
       {\overline C}_{i, 1}(s; g_i(s)) \, x^{\gamma_1}
       + {\overline C}_{i, 2}(s; g_i(s)) \, x^{\gamma_2}
       \end{align}
       with
       \begin{align}
       \label{Ci71}
       {\overline C}_{i, j}(s; g_i(s))
       &= \frac{\gamma_{3-j} G_i(g_i(s), s) - g_i(s) \partial_x G_i(g_i(s), s)}
       {(\gamma_{3-j} - \gamma_j)(g^{\gamma_j}_i(s))}
       \end{align}
       for all $0 < g_i(s) < x \le s$ such that $s > {\underline s}_{i,0}$,
       with the optimal exercise boundary $g^*_i(s)$ being the maximal solution to the
       first-order nonlinear ordinary differential equation:
       \begin{align}
       \label{h71}
       &g_{i}'(s) =
       - \frac{\partial_s {\bar C}_{i,1}(s; g_i(s)) s^{\gamma_1}
       + \partial_s {\bar C}_{i,2}(s; g_i(s)) s^{\gamma_2}}
       {\partial_{g_i} {\bar C}_{i,1}(s; g_i(s)) s^{\gamma_1} +
       \partial_{g_i} {\bar C}_{i,2}(s; g_i(s)) s^{\gamma_2}}
       \end{align}
       such that $0 < g_{i}(s) < s$,
       for $s > {\underline s}_{i,0}$ and $i = 2, 3$.
       Moreover, it follows from the results of \cite{GapSPL20} and
       \cite[Section~5]{Gapmax24} that the candidate solution to the right-hand problem,
       for $i = 2, 3$, takes the form:
       \begin{align}
       \label{Vi72}
       {\overline U}_i(x, q; h_i(q))
       &= {\overline D}_{i, 1}(q; h_i(q)) \, x^{\gamma_1}
       + {\overline D}_{i, 2}(q; h_i(q)) \, x^{\gamma_2}
       \end{align}
       with
       \begin{align}
       \label{Ci72}
       {\overline D}_{i, j}(q; h_i(q))
       &= \frac{\gamma_{3-j} G_i(h_i(q), q) - h_i(q) \partial_x G_i(h_i(q), q)}
       {(\gamma_{3-j} - \gamma_j)(h^{\gamma_j}_i(q))}
       \end{align}
       for $0 < q \le x < h^*_i(q)$ such that $q < {\overline q}_{i,0}$,
       with the optimal exercise boundary $h^*_i(q)$ being the minimal
       solution to the first-order nonlinear ordinary differential equation:
       \begin{align}
       \label{h72}
       &h_{i}'(q) =
       - \frac{\partial_q {\bar D}_{i,1}(q; h_i(q)) q^{\gamma_1}
       + \partial_q {\bar D}_{i,2}(q; h_i(q)) q^{\gamma_2}}
       {\partial_{h_i} {\bar D}_{i,1}(q; h_i(q)) q^{\gamma_1} +
       \partial_{h_i} {\bar D}_{i,2}(q; h_i(q)) q^{\gamma_2}}
       \end{align}
       such that $h_{i}(q) > q$, for $q < {\overline q}_{i,0}$.

       In particular, for $i = 2$, it is seen from the expressions in (\ref{h71})
       and (\ref{h72}) that $g^*_{2}(s) \equiv {\underline \lambda} s$,
       for $s > 0$, and $h^*_2(q) \equiv {\overline \nu} q$,
       for $q > 0$, where the numbers $0 < {\underline \lambda} < 1$
       and ${\overline \nu} > 1$ provide the unique roots of
       the power arithmetic equations:
       \begin{align}
       \label{h71la}
       &\lambda^{\gamma_{1} - \gamma_{2}} =
       \frac{(\gamma_1 - 1) (\gamma_2 (1 - L_2 \lambda) + L_2 \lambda)}
       {(\gamma_2 - 1) (\gamma_1 (1 - L_2 \lambda) + L_2 \lambda)}
       \quad \text{and} \quad
       \nu^{\gamma_{1} - \gamma_{2}} =
       \frac{(\gamma_1 - 1) (\gamma_2 (1 - K_2 \nu) + K_2 \nu)}
       {(\gamma_2 - 1) (\gamma_1 (1 - K_2 \nu) + K_2 \nu)}
       \end{align}
       on the interval $(0, 1)$ and $(1, \infty)$, respectively.
       Furthermore, for $i = 3$, by means of straightforward computations,
       it follows that the first-order nonlinear ordinary differential
       equations in (\ref{h71}) and (\ref{h72}) take the form:
       \begin{align}
       \label{h73}
       g_{3}'(s) &= \frac{g_{3}(s)}{s - L_3} \, \frac{\gamma_2 (s/g^{\gamma_1}_{3}(s)) -
       \gamma_1 (s/g^{\gamma_2}_{3}(s))}{\gamma_1 \gamma_2 ((s/g^{\gamma_1}_{3}(s))
       - (s/g^{\gamma_2}_{3}(s)))}
       \end{align}
       for all $s > {\underline s}_{3,0} \equiv L_3$, and
       \begin{align}
       \label{h74}
       h_{3}'(q) &= \frac{h_3(q)}{K_3 - q} \, \frac{\gamma_2 (q/h^{\gamma_1}_{3}(q)) -
       \gamma_1 (q/h^{\gamma_2}_{3}(q))} {\gamma_1 \gamma_2 ((q/h^{\gamma_1}_{3}(q))
       -(q/h^{\gamma_2}_{3}(q)))}
       \end{align}
       for all $q < {\overline q}_{3,0} \equiv K_3$, respectively.

\subsection{The results.}
Summarising the facts shown above, we state the following result which follows from the results
of \cite[Chapter~VIII, Section~2a]{FM}, \cite{Jesper}, \cite{GuoShepp}, \cite{GapSPL20} and
\cite[Section~5]{Gapmax24}.

      \begin{proposition}
      \label{CorA1}
      Let the processes $(X, S)$ and $(X, Q)$ be given by (\ref{X4})-(\ref{dX4}) and (\ref{SQ4}) with $r > 0$, $\delta > 0$ and $\sigma > 0$. Then, the value functions of the perpetual American standard and lookback put and call options from (\ref{VU7a1}) admit the expressions:
      \begin{equation}
      \label{W*71a}
      {\bar V}^*_i(x, s) =
      \begin{cases}
      {\bar V}_i(x, s; g^*_i(s)), & \text{if} \quad g^*_i(s) < x \le s, \\
      G_i(x, s), & \text{if} \quad 0 < x \le g^*_i(s),
      \end{cases}
      \end{equation}
      and
      \begin{equation}
      \label{W*71b}
      {\bar U}^*_i(x, q) =
      \begin{cases}
      {\bar U}_i(x, q; h^*_i(q)), & \text{if} \quad 0 < q \le x < h^*_i(q), \\
      F_i(x, q), & \text{if} \quad x \ge h^*_i(q),
      \end{cases}
      \end{equation}
      while the optimal exercise times have the form of (\ref{tau7*}) above,
      where the candidate value functions and exercise boundaries are specified as follows:

(i) the function ${\bar V}_1(x, s; g^*_1(s)) \equiv {\bar V}_1(x; g^*_1)$
is given by (\ref{Upc1}) with $g^*_1$ from (\ref{gh1}), the functions ${\bar V}_i(x, s; g^*_i(s))$, for $i = 2, 3$, is given by (\ref{Vi71})-(\ref{Ci71}) with $g^*_{2}(s) \equiv {\underline \lambda} s$ and ${\underline \lambda}$ being a unique solution to
the arithmetic equation in (\ref{h71la}) and $g^*_{3}(s)$ being a maximal solution to the ordinary differential equation in (\ref{h73});

(ii) the function ${\bar U}_1(x, q; h^*_1(q)) \equiv {\bar U}_1(x; h^*_1)$
is given by (\ref{Upc1}) with $h^*_1$ from (\ref{gh1}), the functions ${\bar U}_i(x, q; h^*_i(q))$, for $i = 2, 3$, is given by (\ref{Vi72})-(\ref{Ci72}) with $h^*_{2}(q) \equiv {\overline \nu} q$ and ${\overline \nu}$ being a unique solution of the arithmetic
equation in (\ref{h71la}) and $h^*_{3}(q)$ being a maximal solution to the ordinary differential equation in (\ref{h74}).
\end{proposition}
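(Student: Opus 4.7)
My plan is to verify Proposition~\ref{CorA1} by the classical optimal stopping route for two-dimensional Markov processes of maximum/minimum type, following \cite{Pmax}, \cite{Jesper}, \cite{GuoShepp}, and \cite[Section~5]{Gapmax24}. The argument splits naturally into (a) constructing candidate value functions and boundaries via the associated free-boundary problem, and (b) verifying by It\^o's formula that the candidates coincide with the suprema in (\ref{VU7a1}). I sketch the left-hand (put) problem explicitly; the right-hand (call) problem is symmetric, with maxima/minima, upper/lower boundaries, and maximal/minimal ODE solutions interchanged.

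For the candidate construction, the ODE $\LL f - r f = 0$ on $\{0 < x < s\}$ has general solution $f(x,s)=C_1(s) x^{\gamma_1}+C_2(s) x^{\gamma_2}$ with $\gamma_l$ from (\ref{gamma12}). When $i=1$ the payoff is independent of $s$, so the problem is one-dimensional, and continuous-fit and smooth-fit at a constant boundary produce (\ref{Upc1}) together with (\ref{gh1}) directly. When $i=2,3$ I would impose continuous-fit and smooth-fit at $x=g_i(s)$ to obtain the coefficients $\bar C_{i,l}(s;g_i(s))$ in (\ref{Ci71}), and then enforce the normal-reflection condition $\partial_s \bar V_i(x,s)\big|_{x=s-}=0$ at the diagonal; differentiating the fit relations in $s$ converts this condition into the first-order ODE (\ref{h71}) for the boundary. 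For $i=2$ the scaling invariance forces $g^*_2(s)=\underline\lambda s$ and (\ref{h71}) collapses to the algebraic root equation (\ref{h71la}); for $i=3$ the ODE takes the explicit form (\ref{h73}) and admits a one-parameter family of admissible solutions below the diagonal.

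For the verification step, I would apply It\^o's formula, with the local-time term on the diagonal annihilated by normal reflection, to the process $e^{-rt}\bar V_i^{g_i}(X_t,S_t)$ for any $C^{2,1}$ candidate built from an admissible boundary $g_i$ satisfying $0<g_i(s)<s$. The resulting decomposition has a nonpositive drift (zero above $g_i(s)$ by construction, and equal to $H_{i,1,0}(x,s)\le 0$ below, by the arguments of Theorem~\ref{lem21}) and a local-martingale remainder. Localizing along $\kappa_n=\inf\{t\ge 0\,|\,|M_t|\ge n\}$, applying optional sampling and Fatou's lemma, one obtains $\EE_{x,s}[e^{-r\tau}G_i(X_\tau,S_\tau)]\le \bar V_i^{g_i}(x,s)$ for every $\FF$-stopping time $\tau$; taking $\sup_\tau$ and then the infimum along a sequence $g_{i,m}\uparrow g^*_i$ yields the upper bound $\bar V^*_i(x,s)\le \bar V_i^{g^*_i}(x,s)$. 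The matching lower bound follows by plugging $\tau=\bar\tau_i'$ from (\ref{tau7*}) into the same representation: both the drift and the local-time integrals vanish identically on this event, so a dominated-convergence argument along a diagonal subsequence (justified by a uniform-integrability estimate analogous to (\ref{ESS4})) delivers $\EE_{x,s}[e^{-r\bar\tau_i'}G_i(X_{\bar\tau_i'},S_{\bar\tau_i'})]=\bar V_i^{g^*_i}(x,s)$.

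The main obstacle is rigorously selecting the maximal admissible solution $g^*_3$ of the ODE (\ref{h73}) for the fixed-strike lookback case. I would handle this by (a) invoking Picard iteration and the (local) Lipschitz property of the right-hand side to establish existence and local uniqueness, (b) constructing, from a sequence of starting points $(g_{3,m}(s_m),s_m)$ with $s_m\uparrow\infty$, a monotone family of non-intersecting integral curves whose pointwise limit $g^*_3$ is the maximal admissible solution staying strictly below the diagonal, and (c) showing by Gronwall's inequality that this limit is continuous so the free-boundary analysis is legitimate. The deeper point linking existence to verification is the superharmonic characterisation: $\bar V_i^{g_i}$ must be shown to be monotone decreasing in $g_i(s)$ on the admissible range, so that the infimum in the upper-bound step is actually attained at $g^*_i$, giving the smallest superharmonic majorant of the payoff in accordance with the maximality principle of \cite{Pmax}.
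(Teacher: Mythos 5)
Your outline is correct, but it differs from the paper in a basic way: the paper does not prove Proposition~\ref{CorA1} at all. The result is presented as a summary of known facts, with the argument deferred entirely to the cited literature (\cite[Chapter~VIII, Section~2a]{FM}, \cite{Jesper}, \cite{GuoShepp}, \cite{GapSPL20}, \cite[Section~5]{Gapmax24}); the Appendix merely records the candidate formulas (\ref{Upc1})--(\ref{gh1}), (\ref{Vi71})--(\ref{Ci72}) and the boundary equations (\ref{h71la}), (\ref{h73})--(\ref{h74}) that those references produce. What you have written is, in effect, a self-contained reconstruction of the proofs in those references: continuous fit and smooth fit at the boundary, normal reflection at the diagonal converted (by differentiating the fit relations in $s$) into the boundary ODE (\ref{h71}), the scaling reduction to (\ref{h71la}) for $i=2$, the It\^o/verification step with localisation, optional sampling, Fatou and dominated convergence, and the selection of the maximal (respectively minimal) admissible solution via non-intersecting integral curves and the maximality principle of \cite{Pmax}. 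This is exactly the scheme the paper executes in detail for its own main result (Subsection~3.3 and the proof of Theorem~\ref{thm41}), so your argument is methodologically consistent with the paper; the citation route buys brevity, yours buys self-containedness. One point in your favour: part (ii) of the proposition as printed calls $h^*_3(q)$ a \emph{maximal} solution of (\ref{h74}), which is a slip --- the body of the Appendix (and the put/call symmetry you invoke, with upper boundaries and minimal solutions interchanged for the call case) require the \emph{minimal} admissible solution there, exactly as you state.
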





{\footnotesize
}

\end{document}